\documentclass[11pt]{article}

\usepackage[margin=1in]{geometry}
\usepackage{amsmath,amssymb,amsthm,mathtools}
\usepackage{enumitem}
\usepackage{hyperref}
\usepackage{xcolor}                 % [added] for the corrections below
\hypersetup{
    colorlinks=true,
    linkcolor=blue,
    citecolor=blue,
    urlcolor=blue
}

\newtheorem{theorem}{Theorem}[section]
\newtheorem{proposition}[theorem]{Proposition}
\newtheorem{lemma}[theorem]{Lemma}
\newtheorem{corollary}[theorem]{Corollary}
\newtheorem{definition}[theorem]{Definition}
\newtheorem{remark}[theorem]{Remark}

\newcommand{\F}{\mathbb F}
\newcommand{\K}{\mathbb K}
\newcommand{\supp}{\operatorname{supp}}
\newcommand{\wt}{\operatorname{wt}}
\newcommand{\Span}{\operatorname{span}}
\newcommand{\RS}{\operatorname{RS}}

\newcommand{\Enc}{\operatorname{Enc}}
\newcommand{\Dec}{\operatorname{Dec}}
\newcommand{\enc}{\operatorname{enc}}
\newcommand{\dec}{\operatorname{dec}}

\title{Quantum Codes with Transversal \texorpdfstring{$CCZ$}{CCZ} Gates and Sublinear \texorpdfstring{$Z$}{Z}-Stabilizers}\author{
Ohad Elishco\thanks{
Ohad Elishco is with the School of Electrical and Computer Engineering,
Ben-Gurion University of the Negev, Israel
(e-mail: elishco@gmail.com).
} \ and Itzhak Tamo\thanks{
Itzhak Tamo is with the Department of Electrical Engineering--Systems,
Tel Aviv University, Tel Aviv 6997801, Israel (e-mail: tamo@tauex.tau.ac.il).
}
}
\date{}

\begin{document}
\maketitle

\begin{abstract}
We construct asymmetric quantum CSS codes with transversal \(CCZ\) gates from
algebraic expander codes \cite{KT26}.  For every fixed \(m\ge 3\), our growing-alphabet
codes have length \(N\), dimension \(\Theta(N)\), and  distances
\[
        d_X=\Theta(N),
        \qquad
        d_Z=\Theta(N^{1/m}).
\]
Moreover, the \(Z\)-stabilizer space has an explicit generating set of weight
\(O(N^{1/m})\).  

We build on the algebraic puncturing framework of Golowich and Guruswami \cite{GG24}, which turns classical codes with the required Schur-product and distance conditions into CSS codes with transversal \(CCZ\).
However, applying the framework directly to the algebraic expander codes  runs into their small dual distance, and therefore produces only sublinear dimension.
Our main technical step is a refined puncturing theorem in which the global dual-distance assumption is replaced by a condition only on the selected puncturing set.
 We also reduce the alphabet to a fixed prime field using a projective-multiplicity version of multiplication-friendly codes. 
The resulting fixed-prime-field CSS code triples, of length \(n\), still have transversal \(CCZ\) gates. Their
dimension is \(\Theta(n/(\log n)^4)\),  with distances 
\[
        d_X=\Omega\!\left(\frac{n}{(\log n)^4}\right),
        \qquad
        d_Z=\Omega\!\left(\frac{n^{1/m}}{(\log n)^{4/m}}\right),
\]
and the \(Z\)-stabilizer generating set remains sublinear.
\end{abstract}

\section{Introduction}
Transversal gates are one of the simplest mechanisms for implementing logical gates fault-tolerantly. Because they act independently on corresponding physical qudits, a single fault during a transversal gate cannot spread to many qudits within the same code block. At the same time, the Eastin--Knill theorem shows that, for any nontrivial quantum code detecting arbitrary single-subsystem errors, transversal gates cannot by themselves form a universal encoded gate set \cite{EastinKnill2009}. This makes it important to identify code families that support useful transversal non-Clifford gates while retaining strong coding parameters. One central example is the \(CCZ\) gate, a diagonal gate in the third level of the Clifford hierarchy \cite{CGK17}. Together with Clifford gates, \(CCZ\) gives a universal gate set.
 Constructions based on triorthogonal codes gave early examples of transversal non-Clifford gates, but do not yield asymptotically good code families \cite{BravyiHaah2012}.

Krishna and Tillich showed that algebraic multiplicative structure in classical codes can be used to construct quantum CSS codes with transversal non-Clifford gates \cite{KrishnaTillich2019}. In particular, using punctured Reed--Solomon (RS) codes, they obtained families supporting transversal third-level Clifford hierarchy gates with dimension and distance linear in the block length, over alphabets whose size grows with the block length. Golowich and Guruswami later isolated the general mechanism behind this construction: if a classical code \(C\), its dual \(C^\perp\), and the triple Schur-product code \(C^{\ast 3}\) all satisfy appropriate distance conditions, then puncturing a set of logical coordinates produces a CSS code with transversal \(CCZ\) \cite{GG24}. They instantiated this framework with algebraic-geometric codes and combined it with an alphabet-reduction procedure based on quantum multiplication-friendly codes, obtaining asymptotically good quantum codes over arbitrary fixed prime-power alphabets. 
A concurrent independent construction with similar asymptotically good parameters and applications to constant-overhead magic-state distillation was obtained by Wills, Hsieh, and Yamasaki \cite{WHY25}.
Related fixed-alphabet and binary constructions were subsequently given by Nguyen \cite{Nguyen24}. These results show that transversal \(CCZ\) gates can be realized by code families whose dimension and distance are both linear in the block length. What these algebraic constructions above do not provide, however, is low-weight stabilizer checks. This matters for syndrome extraction, since high-weight checks are harder to measure fault-tolerantly and usually require more complicated measurement circuits.  In this work we also keep track of the two CSS distances separately, since the construction parameters we obtain are highly asymmetric.

    Several recent works have focused directly on the weight of the stabilizer checks. Golowich and Lin constructed qLDPC codes with transversal \(C^{r-1}Z\) gates, almost-linear dimension, polynomial distance, and polylogarithmic stabilizer weight \cite{GolowichLin25}. Golowich and Guruswami subsequently obtained subsystem-code constructions with transversal \(CCZ\), linear dimension and distance, and sublinear parity-check locality \cite{GG25}. In their main construction the locality is \(O(\sqrt N)\) over an alphabet of size \(\Theta(\sqrt N)\), and the alphabet can be made constant at the cost of polylogarithmic losses. Another recent direction uses topological and homological constructions, based on cup and cap products, to obtain transversal \(CCZ\) and more general multi-controlled-\(Z\) gates for qLDPC and qLTC families \cite{Zhu25,LiLiLiu2026}.
 Taken together, these works show that low-weight checks are compatible with transversal non-Clifford gates. Their techniques, however, are quite different from the Schur-product-based algebraic puncturing framework that we use here.
    
    The point of comparison most relevant to the present paper is not that our
minimum-distance parameters dominate these constructions. Rather, we return to
the algebraic puncturing method underlying
\cite{KrishnaTillich2019,GG24} and ask whether it can produce stabilizer
locality.  We show that
it can, at least on one stabilizer side.

More precisely, we refine the Golowich--Guruswami puncturing argument and apply
the refined version to the algebraic expander codes of Kopparty and the second
author \cite{KT26}. For every fixed integer \(m\ge 3\), this gives
growing-alphabet CSS codes of block length \(N\), linear dimension, transversal
\(CCZ\), and asymmetric distances
\[
        d_X=\Theta(N),
        \qquad
        d_Z=\Theta(N^{1/m}).
\]
In addition, the \(Z\)-stabilizer space has an explicit generating set of weight
\(O(N^{1/m})\). These \(Z\)-stabilizers are the checks used to detect \(X\)-type
errors, so the low-weight checks are aligned with the linear \(X\)-distance.
Thus the construction is not qLDPC, and the low-weight generation is only
one-sided, but it shows that sublinear stabilizer locality is  possible within the algebraic Schur-product approach.

\paragraph{Beyond the Black-Box Puncturing Bound.}
Our construction uses the algebraic expander codes of \cite{KT26}. These are Tanner-type codes \cite{Tanner81} whose local constraints are RS codes supported on the orbits of two noncommuting subgroups of \(\operatorname{AGL}(1,\mathbb F)\). For every fixed local rate parameter \(r\in(0,1)\), these codes have positive rate and linear distance. The point that matters for us is the low-rate regime: the standard constraint-counting argument for Tanner codes gives only the lower bound \(2r-1\) on the global rate, and hence gives no nontrivial rate bound when \(r\le 1/2\). Being able to work below this threshold is crucial for Schur-product constructions. Indeed, if \(C_r\) denotes the algebraic expander code with local rate \(r<1/3\), then the product of three codewords of \(C_r\) lies in \(C_{3r}\), which has linear distance.

However, a direct application of the Golowich--Guruswami theorem to these codes does not give linear dimension. The obstruction is the small dual distance of the algebraic expander code \(C_r\): in the relevant parameter regime, one has \(d(C_r^\perp)=\Theta(N^{1/m})\). In the black-box theorem, the number of logical coordinates that can be punctured is bounded by this dual distance, so one obtains only \(O(N^{1/m})\) logical qudits. Thus, although the underlying classical code \(C_r\) has linear dimension, the resulting quantum code has vanishing rate.

Our main technical contribution is to show that this dimension loss is an artifact of the black-box dual-distance condition. We prove a refined puncturing theorem that replaces the global dual-distance condition by an \(A\)-dependent condition: what matters is how much support a dual codeword must have outside the chosen logical set \(A\). We then choose \(A\) to be an interpolation set for a lower-rate algebraic expander code \(C_{r_0}\), with \(r_0<r\). This choice gives \(|A|=\Theta(N)\), while still ensuring that every nonzero word in \(C_r^\perp\) has \(\Omega(N^{1/m})\) nonzero coordinates outside \(A\). Consequently, the punctured CSS code has linear dimension and asymmetric distances \(d_X=\Theta(N)\) and \(d_Z=\Theta(N^{1/m})\), rather than the sublinear dimension obtained from the black-box application.

\paragraph{Main Contributions.}
We summarize the main results of the paper.

First, we obtain an explicit asymmetric construction with one-sided locality
over growing alphabets.  For every fixed integer \(m\ge 3\) and every prime
characteristic \(p\), we construct CSS codes over fields of characteristic \(p\)
with length \(N\), linear dimension,  distances
\[
        d_X=\Theta(N),
        \qquad
        d_Z=\Theta(N^{1/m}),
\] and polynomial field size.  
The codes support transversal \(CCZ\) gates, and their \(Z\)-stabilizer space has
an explicit generating set of weight \(O(N^{1/m})\). The generators have a simple
description in terms of the Tanner structure of the underlying algebraic
expander code: they are supported either on a single orbit or on the union of
two orbits. Since these \(Z\)-stabilizers detect \(X\)-type errors, the
sublinear-check side is aligned with the linear-distance side.

Second, we reduce the alphabet to a fixed prime field. To do so, we use the alphabet-reduction framework of Golowich and Guruswami and track the effect of concatenation and restriction on the \(Z\)-stabilizer generators. For the inner codes, building on the multiplication-friendly-code paradigm \cite{GJX17}, we introduce a projective-multiplicity version over \(\mathbb P^1(\mathbb F_q)\), which works over arbitrary extensions and in arbitrary characteristic. This gives, for every fixed prime \(p\), CSS code triples over \(\mathbb F_p\) supporting transversal \(CCZ\), with parameters
\[
\left[\left[
n,\
\Omega\left(\frac{n}{(\log n)^4}\right),\
        d_X=\Omega\!\left(\frac{n}{(\log n)^4}\right), d_Z=\Omega\!\left(\frac{n^{1/m}}{(\log n)^{4/m}}\right)  
\right]\right]_p,
\]
and with \(Z\)-stabilizer generators of weight
\(
O\left(n^{1/m}(\log n)^{1-4/m}\right).
\)

\paragraph{Organization.}
Section~\ref{sec:preliminaries} recalls the necessary background on CSS codes, Schur products, transversal \(CCZ\) gates, and algebraic expander codes. Section~\ref{sec:puncturing-theorem} proves the refined puncturing theorem. Section~\ref{sec:aec-input} applies this theorem to algebraic expander codes and proves the growing-alphabet construction, including the explicit low-weight generating set for the \(Z\)-stabilizers. Section~\ref{sec:one-shot-alphabet-reduction} gives the fixed-prime-field alphabet reduction using projective-multiplicity multiplication-friendly codes. Section~\ref{sec:discussion} concludes with open problems.

\section{Preliminaries}
\label{sec:preliminaries}

Throughout, all vector spaces are over a finite field unless explicitly stated otherwise.
For a finite set \(\Omega\), we write \(\F^\Omega\) for the space of functions \(\Omega\to\F\). For \(x\in\F^\Omega\), let \(\supp(x)=\{\omega\in\Omega:x_\omega\ne 0\}\) and \(\wt(x)=|\supp(x)|\). The standard bilinear form on \(\F^\Omega\) is \(\langle x,y\rangle=\sum_{\omega\in\Omega}x_\omega y_\omega\). For a linear code \(C\subseteq\F^\Omega\), its dual is \(C^\perp=\{x\in\F^\Omega:\langle x,c\rangle=0\text{ for all }c\in C\}\).
If \(S\subseteq\Omega\), we write \(C|_S\) for the puncturing of \(C\) to \(S\). We also identify \(\F^S\) with the subspace of \(\F^\Omega\) consisting of vectors supported on \(S\). Thus \(C\cap \F^S\) is the shortening of \(C\) to \(S\), viewed inside \(\F^\Omega\). We will repeatedly use the standard
puncturing-shortening dualities
\begin{equation}
        (C|_S)^\perp=(C^\perp\cap \F^S)|_S,
        \qquad
        (C^\perp|_S)^\perp=(C\cap \F^S)|_S.
        \label{eq:puncture-shorten-duality}
\end{equation}

For a nonzero linear code \(C\subseteq \F^\Omega\), its minimum distance is
\[
        d(C)=\min_{0\ne c\in C}\wt(c).
\]
 If \(|\Omega|=n\), \(\dim_\F C=k\), and \(d(C)=d\),
we say that \(C\) is an \([n,k,d]_{\F}\) code, or an \([n,k,d]_q\) code when
\(|\F|=q\).

Let \(\F[x]^{<k}\) be the space of polynomials of degree less than \(k\) over \(\F\).
\paragraph{Interpolation sets, Schur powers, and RS codes}

For \(A\subseteq\Omega\), we say that \(A\) is an interpolation set for \(C\) if \(C|_A=\F^A\). Equivalently, every assignment of values on \(A\) can be interpolated by a codeword of \(C\).

For \(x,y\in\F^\Omega\), their Schur product is \(x*y=(x_\omega y_\omega)_{\omega\in\Omega}\). For linear codes \(C_1,\ldots,C_t\subseteq\F^\Omega\), define \(C_1*\cdots*C_t=\Span_\F\{c_1*\cdots*c_t:c_i\in C_i\}\). When all \(C_i=C\), we write \(C^{*t}=C*\cdots*C\). The case \(t=3\) is the one relevant for transversal \(CCZ\) gates.

For a finite set \(B\subseteq\F\) and an integer \(k\le |B|\), define the Reed-Solomon code
\[
        \RS_B(k)=\{(f(\beta))_{\beta\in B}:\deg f<k\}\subseteq\F^B.
\]
This is a \([|B|,k,|B|-k+1]_\F\) MDS code. In particular, \(d(\RS_B(k)^\perp)=k+1\).

\subsection{CSS codes and restriction}

We use the standard CSS formalism, with notation for specified logical
encoding maps following \cite{GG24}. 

A CSS code over \(\F\) is a pair
\(Q=\operatorname{CSS}(Q_X,Q_Z)\) with \(Q_X,Q_Z\subseteq\F^n\) such that
\(Q_X^\perp\subseteq Q_Z\).
Its dimension is
\(
        \dim Q=\dim Q_Z-\dim Q_X^\perp .
\)
Elements of \(Q_X^\perp\) are called \(X\)-stabilizers, and elements of
\(Q_Z^\perp\) are called \(Z\)-stabilizers.

We will keep track of the two CSS distances separately. With the above convention,
the minimum weight of a nontrivial \(X\)-type logical operator is
\[
        d_X(Q)
        =
        \min\{\wt(x):x\in Q_Z\setminus Q_X^\perp\},
\]
and the minimum weight of a nontrivial \(Z\)-type logical operator is
\[
        d_Z(Q)
        =
        \min\{\wt(z):z\in Q_X\setminus Q_Z^\perp\}.
\]
The usual quantum distance is
\(
        d(Q)=\min\{d_X(Q),d_Z(Q)\}.
\)
When we want to keep track of the two distances, we will write the parameters as
\([[n,k,d_X,d_Z]]_\F\).

A \(Z\)-encoding map for \(Q\) is an \(\F\)-linear isomorphism \(\Enc_Z:\mathbb F^{\dim Q}\xrightarrow{\sim} Q_Z/Q_X^\perp\), where \(\mathbb F^{\dim Q}\) is the logical space. We write \(Q=\operatorname{CSS}(Q_X,Q_Z;\Enc_Z)\) when the \(Z\)-encoding map is part of the data.

We use the same notation \(\Enc_Z(y)\) both for
the quotient class in \(Q_Z/Q_X^\perp\) and for the corresponding coset inside
\(Q_Z\). Thus \(z\in\Enc_Z(y)\) means that \(z\in Q_Z\) is a representative of
the logical \(Z\)-coset labeled by \(y\).

\paragraph{Compatible \(X/Z\)-encodings.}
 An \(X\)-encoding map for
\(Q=\operatorname{CSS}(Q_X,Q_Z)\), with logical space \(V\), is an \(\F\)-linear
isomorphism
\[
        \Enc_X:V\xrightarrow{\sim} Q_X/Q_Z^\perp .
\]
Given a nondegenerate bilinear form \(B:V\times V\to\F\), we say that
\((\Enc_X,\Enc_Z)\) are compatible with respect to \(B\) if, for every
\(x,z\in V\) and every choice of representatives
\(x'\in\Enc_X(x)\) and \(z'\in\Enc_Z(z)\), one has
\[
        B(x,z)=\langle x',z'\rangle .
\]
We write
\(
        Q=\operatorname{CSS}(Q_X,Q_Z;\Enc_X,\Enc_Z)
\)
when both encoding maps are given.

We will need the following lemma from \cite{GG24}.

\begin{lemma}[Compatible \(X\)-encodings {\cite[Lemma~2.5]{GG24}}]
\label{lem:compatible-X-encoding}
Let \(Q=\operatorname{CSS}(Q_X,Q_Z;\Enc_Z)\) be a CSS code with logical
space \(V\), and let \(B:V\times V\to\F\) be a nondegenerate bilinear form. Then there exists
an \(X\)-encoding map \(\Enc_X\) compatible with \(\Enc_Z\) with respect to \(B\).
\end{lemma}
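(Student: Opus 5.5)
The plan is to build \(\Enc_X\) from the nondegenerate pairing that the standard form induces between \(Q_X/Q_Z^\perp\) and \(Q_Z/Q_X^\perp\), and then transport it through \(B\).

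\emph{Step 1: the induced pairing is well defined.} Define \(P:Q_X/Q_Z^\perp\times Q_Z/Q_X^\perp\to\F\) by \(P([x'],[z'])=\langle x',z'\rangle\). Changing \(x'\) by an element of \(Q_Z^\perp\) does not change the value, because \(z'\in Q_Z\). Changing \(z'\) by an element of \(Q_X^\perp\) also leaves it unchanged, because \(x'\in Q_X\). So \(P\) is a well-defined bilinear form.

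\emph{Step 2: \(P\) is nondegenerate.} This is the main step. Suppose \(z'\in Q_Z\) satisfies \(\langle x',z'\rangle=0\) for all \(x'\in Q_X\). Then \(z'\in Q_X^\perp\), so its class is zero. In the other direction, suppose \(x'\in Q_X\) pairs to zero with all of \(Q_Z\). Then \(x'\in Q_Z^\perp\), so its class is zero. Both quotients are finite dimensional and \(P\) has trivial kernels on both sides, so \(P\) is a perfect pairing. In particular the two quotients have the same dimension \(\dim Q=\dim V\). One can also check this directly:
\[
\dim Q_X-\dim Q_Z^\perp=\dim Q_X-(n-\dim Q_Z)=\dim Q_Z-\dim Q_X^\perp .
\]

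\emph{Step 3: define \(\Enc_X\).} Since \(B\) is nondegenerate on \(V\), the map \(x\mapsto B(x,\cdot)\) is an isomorphism \(V\to V^*\). Since \(\Enc_Z\) is an isomorphism, precomposition with \(\Enc_Z^{-1}\) gives an isomorphism \(V^*\to (Q_Z/Q_X^\perp)^*\). By Step 2, the map \(\xi\mapsto P(\xi,\cdot)\) is an isomorphism \(Q_X/Q_Z^\perp\to (Q_Z/Q_X^\perp)^*\). Set
\[
\Enc_X(x)=\xi \quad\text{where } P(\xi,\cdot)=B\bigl(x,\Enc_Z^{-1}(\cdot)\bigr).
\]
This is a composition of isomorphisms, so \(\Enc_X\) is a linear isomorphism \(V\to Q_X/Q_Z^\perp\).

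\emph{Step 4: compatibility.} Take any representatives \(x'\in\Enc_X(x)\) and \(z'\in\Enc_Z(z)\). By Step 1 and the definition in Step 3,
\[
\langle x',z'\rangle=P(\Enc_X(x),\Enc_Z(z))=B(x,z).
\]
This is exactly the required compatibility, and it completes the proof.
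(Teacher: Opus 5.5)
Your proof is correct and complete. The induced pairing \(Q_X/Q_Z^\perp\times Q_Z/Q_X^\perp\to\F\) is well defined and perfect, and transporting \(B\) through \(\Enc_Z\) determines \(\Enc_X\), in fact uniquely. The paper gives no proof of its own here and simply imports the lemma from Golowich--Guruswami; your duality argument is the standard one that citation relies on.
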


\begin{remark}[Meaning of compatibility]
The compatibility condition says that the physical representatives chosen for
logical phase operators act on the encoded computational basis with the correct
logical phases. For example, in the binary case, the logical operator \(Z(y)\)
acts on the logical basis state \(|v\rangle\) by the phase
\((-1)^{y\cdot v}\). If \(x'\in\Enc_X(y)\) and \(z'\in\Enc_Z(v)\), then the
physical operator \(Z(x')\) acts on the basis vector \(|z'\rangle\) by the phase
\((-1)^{x'\cdot z'}\). Thus \(Z(x')\) implements logical \(Z(y)\) precisely when
\(x'\cdot z'=y\cdot v\). This is the compatibility condition with the standard
inner product as the logical bilinear form. Later, in the CSS concatenation
step, the inner codes will use the bilinear form
\(B(x,z)=\operatorname{Tr}_{\K/\F}(xz)\), which is why we allow an arbitrary
nondegenerate bilinear form \(B\).
\end{remark}

\paragraph{Code restriction.} 
We will also use the following restriction operation for CSS codes. Informally,
restriction adds \(Z\)-stabilizers, thereby reducing the logical dimension while
not decreasing the distance.

To define the operation we will need the following notation. For a subset \(S\subseteq V\), we write
\[
    \Enc_Z(S)=\{\Enc_Z(s):s\in S\}\subseteq Q_Z/Q_X^\perp .
\]

\begin{lemma}[Restriction of logical subspaces]
\label{lem:css-restriction}
Let \(Q=\operatorname{CSS}(Q_X,Q_Z;\Enc_Z)\) be a CSS code with logical space \(V\), and let \(S\subseteq V\) be a linear subspace. Define \(Q_Z(S)=\{z\in Q_Z:z+Q_X^\perp\in \Enc_Z(S)\}\). 
Then \(Q|_S=\operatorname{CSS}(Q_X,Q_Z(S);\Enc_Z|_S)\) is a CSS code of dimension \(\dim S\), and
\[
        d_X(Q|_S)\ge d_X(Q),\qquad
        d_Z(Q|_S)\ge d_Z(Q).
\]
In particular \(d(Q|_S)\ge d(Q)\). Thus restriction only removes logical degrees of freedom; equivalently, it adds \(Z\)-stabilizers.
\end{lemma}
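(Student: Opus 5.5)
The plan is to verify the CSS condition for \(Q|_S\), then its dimension, then the two distance inequalities, unpacking the definitions directly.

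\emph{Well-definedness and CSS condition.} Since \(\Enc_Z\) is linear and \(S\) is a subspace, \(\Enc_Z(S)\) is a subspace of \(Q_Z/Q_X^\perp\). Hence \(Q_Z(S)\) is its preimage under the quotient map \(Q_Z\to Q_Z/Q_X^\perp\), and so it is a linear subspace of \(Q_Z\) containing \(Q_X^\perp\) (the preimage of \(0=\Enc_Z(0)\)). This gives \(Q_X^\perp\subseteq Q_Z(S)\), so \(Q|_S=\operatorname{CSS}(Q_X,Q_Z(S))\) is a CSS code.

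\emph{Dimension and encoding map.} We have \(Q_Z(S)/Q_X^\perp=\Enc_Z(S)\), and \(\Enc_Z|_S:S\to\Enc_Z(S)\) is an isomorphism because \(\Enc_Z\) is injective. Thus \(\Enc_Z|_S\) is a valid \(Z\)-encoding map, and \(\dim Q|_S=\dim Q_Z(S)-\dim Q_X^\perp=\dim \Enc_Z(S)=\dim S\).

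\emph{Distances.} For \(d_X\), the relevant set is \(Q_Z(S)\setminus Q_X^\perp\). It is contained in \(Q_Z\setminus Q_X^\perp\), so the minimum weight over it is at least \(d_X(Q)\). For \(d_Z\), note that \(Q_Z(S)\subseteq Q_Z\) implies \(Q_Z^\perp\subseteq Q_Z(S)^\perp\). Hence
\[
Q_X\setminus Q_Z(S)^\perp\subseteq Q_X\setminus Q_Z^\perp ,
\]
and so \(d_Z(Q|_S)\ge d_Z(Q)\). Taking minima gives \(d(Q|_S)\ge d(Q)\).

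The final sentence of the statement follows from the same inclusion: passing from \(Q_Z\) to the smaller \(Q_Z(S)\) enlarges the \(Z\)-stabilizer space \(Q_Z(S)^\perp\supseteq Q_Z^\perp\).

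The only step needing care is the \(d_Z\) inclusion, where the direction of the containment flips under duality. There is also a degenerate case. If \(S=V\) the sets are unchanged. If \(S=0\) the code has dimension \(0\), the defining sets may be empty, and the minimum is then \(+\infty\) by convention, so the inequalities still hold.
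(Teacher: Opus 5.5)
Your proof is correct and follows the paper's argument essentially verbatim. You get the CSS condition from \(Q_X^\perp\subseteq Q_Z(S)\subseteq Q_Z\), the dimension from \(Q_Z(S)/Q_X^\perp\cong\Enc_Z(S)\), and both distance bounds from the set inclusions induced by \(Q_Z^\perp\subseteq Q_Z(S)^\perp\). Your extra remarks (that \(Q_Z(S)\) is a subspace because it is a preimage, and the empty-minimum convention when \(S=0\)) are harmless details the paper leaves implicit.
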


\begin{proof}
Since \(Q_Z(S)\subseteq Q_Z\) and \(Q_X^\perp\subseteq Q_Z(S)\), the pair
\(\operatorname{CSS}(Q_X,Q_Z(S))\) is a CSS code. The quotient
\(Q_Z(S)/Q_X^\perp\) is identified with \(\Enc_Z(S)\), so the dimension is \(\dim S\).

For the distances, since \(Q_Z(S)\subseteq Q_Z\), we have
\(Q_Z(S)^\perp\supseteq Q_Z^\perp\). Hence
\[
        Q_Z(S)\setminus Q_X^\perp
        \subseteq
        Q_Z\setminus Q_X^\perp,
\]
so the \(X\)-distance cannot decrease. Similarly,
\[
        Q_X\setminus Q_Z(S)^\perp
        \subseteq
        Q_X\setminus Q_Z^\perp,
\]
so the \(Z\)-distance cannot decrease. Thus both CSS distances are nondecreasing under restriction.
\end{proof}

\subsection{Transversal \texorpdfstring{\(CCZ\)}{CCZ}}

Let \(\K\) be a finite field of characteristic \(p\). For \(a\in\K\), define the physical gate
\[
        CCZ_\K^a
        =
        \sum_{x,y,z\in\K}
        \exp\!\left(
            \frac{2\pi i}{p}
            \operatorname{Tr}_{\K/\F_p}(a xyz)
        \right)
        |x,y,z\rangle\langle x,y,z|.
\]
When \(a=1\), we write \(CCZ_\K=CCZ_\K^1\).

We will need the following definition. 
\begin{definition}
\label{def:transversal-ccz}
Let \(I\) be a finite index set and let the logical space be \(V=\K^I\). For \(h=1,2,3\), let \(Q^{(h)}=\operatorname{CSS}(Q_X^{(h)},Q_Z^{(h)};\Enc_Z^{(h)})\) be CSS codes over \(\K\), all of length  \(n\) and  logical space \(\K^I\). We say that the triple \((Q^{(1)},Q^{(2)},Q^{(3)})\) supports transversal \(CCZ_\K\)
if there exists a coefficient vector \(b\in\K^n\) such that for all
\(
        y^{(1)},y^{(2)},y^{(3)}\in\K^I
\)
and all representatives
\(
        z^{(h)}\in \Enc_Z^{(h)}(y^{(h)})\subseteq Q_Z^{(h)},
\)
one has
\begin{equation}
        \sum_{i\in I} y_i^{(1)}y_i^{(2)}y_i^{(3)}
        =
        \sum_{j=1}^n b_j z_j^{(1)}z_j^{(2)}z_j^{(3)}.
        \label{eq:transversal-ccz-identity}
\end{equation}
If \(Q^{(1)}=Q^{(2)}=Q^{(3)}\), we simply say that \(Q\) supports transversal \(CCZ_\K\).
\end{definition}
The next lemma shows that Definition~\ref{def:transversal-ccz} has the intended operational meaning. Namely, if
the identity in \eqref{eq:transversal-ccz-identity} holds for all logical labels and all
choices of physical representatives, then the physical transversal operator
\(
        \bigotimes_{j=1}^n CCZ_\K^{b_j}
\)
acts on the encoded computational basis exactly as the logical \(CCZ_\K\) gate on the logical
space \(\K^I\).
\begin{lemma}[{\cite[Lemma~2.11]{GG24}}]
\label{lem:ccz-identity-implements-gate}
For \(h=1,2,3\), let \(Q^{(h)}=\operatorname{CSS}(Q_X^{(h)},Q_Z^{(h)};\Enc_Z^{(h)})\) be CSS codes that support transversal \(CCZ_\K\) with coefficient vector \(b\in\K^n\). Then applying
\(
    \bigotimes_{j=1}^n CCZ_\K^{b_j}
\)
to the three physical code blocks implements the logical \(CCZ_\K\) gate on the 
logical basis indexed by \(\K^I\).
\end{lemma}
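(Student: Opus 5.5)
The natural route is to compute the action of \(\bigotimes_{j=1}^n CCZ_\K^{b_j}\) directly on encoded computational basis states and compare it with the logical gate. For \(y\in\K^I\), the logical basis state is the uniform superposition over its logical \(Z\)-coset,
\[
|\overline{y}\rangle_h=\frac{1}{\sqrt{|Q_X^{(h)\perp}|}}\sum_{z\in\Enc_Z^{(h)}(y)}|z\rangle .
\]
Here we use that \(\Enc_Z^{(h)}(y)\) is a coset of \(Q_X^{(h)\perp}\) inside \(Q_Z^{(h)}\). The logical \(CCZ_\K\) on \(\K^I\) is the diagonal operator multiplying \(|\overline{y^{(1)}},\overline{y^{(2)}},\overline{y^{(3)}}\rangle\) by
\[
\exp\!\Bigl(\tfrac{2\pi i}{p}\operatorname{Tr}_{\K/\F_p}\bigl(\textstyle\sum_{i\in I}y^{(1)}_iy^{(2)}_iy^{(3)}_i\bigr)\Bigr),
\]
i.e.\ \(\bigotimes_{i\in I}CCZ_\K\) in the logical basis. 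Since the encoded basis states span the code space, it suffices to check this action on them.

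\textbf{Step 1.} Expand the triple tensor product of encoded states as a sum over triples \((z^{(1)},z^{(2)},z^{(3)})\) of representatives. Apply the physical operator term by term; by definition of \(CCZ_\K^{b_j}\), each term \(|z^{(1)},z^{(2)},z^{(3)}\rangle\) acquires the phase
\[
\prod_j\exp\!\Bigl(\tfrac{2\pi i}{p}\operatorname{Tr}_{\K/\F_p}\bigl(b_jz^{(1)}_jz^{(2)}_jz^{(3)}_j\bigr)\Bigr)
=\exp\!\Bigl(\tfrac{2\pi i}{p}\operatorname{Tr}_{\K/\F_p}\bigl(\textstyle\sum_j b_jz^{(1)}_jz^{(2)}_jz^{(3)}_j\bigr)\Bigr).
\]
This uses additivity of the trace and the fact that the qudits are reordered as triples \((j\) of block 1, \(j\) of block 2, \(j\) of block 3\()\).

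\textbf{Step 2.} Invoke identity \eqref{eq:transversal-ccz-identity}. It holds for \emph{every} choice of representatives, so the exponent equals \(\operatorname{Tr}\bigl(\sum_i y^{(1)}_iy^{(2)}_iy^{(3)}_i\bigr)\) independently of the summation index. The phase therefore factors out of the sum, giving exactly the logical phase times \(|\overline{y^{(1)}},\overline{y^{(2)}},\overline{y^{(3)}}\rangle\). Extending by linearity, the physical operator agrees with logical \(CCZ_\K\) on the whole code space, and in particular preserves the code space.

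The only delicate point is the bookkeeping in Step 2: the phase must be constant over the entire coset, so that no interference occurs. This is exactly why Definition~\ref{def:transversal-ccz} quantifies over all representatives rather than a chosen one. Beyond that, one only needs to confirm the qudit ordering and that \(CCZ_\K^{a}\) is diagonal with phase \(\operatorname{Tr}(axyz)\), both of which follow directly from the definitions.
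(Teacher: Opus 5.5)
Your proposal is correct and follows essentially the same argument as the paper. You expand the encoded basis states as uniform superpositions over the logical \(Z\)-cosets, observe that the transversal gate multiplies each term by the phase \(\exp\bigl(\frac{2\pi i}{p}\operatorname{Tr}_{\K/\F_p}(\sum_j b_j z^{(1)}_j z^{(2)}_j z^{(3)}_j)\bigr)\), and use \eqref{eq:transversal-ccz-identity}, which holds for every choice of representatives, to identify this with the logical \(CCZ_\K\) phase; your explicit remarks that the phase is constant on each coset and that linearity extends the action to the whole code space only spell out what the paper leaves implicit.
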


\begin{proof}
For logical basis states \(y^{(1)},y^{(2)},y^{(3)}\), the \(Z\)-encoded states are uniform
superpositions over the corresponding cosets
\(
        \Enc_Z^{(h)}(y^{(h)})\subseteq Q_Z^{(h)}.
\)
Consider \(3\) representatives  \(z^{(h)}\in \Enc_Z^{(h)}(y^{(h)})\).  Then the physical transversal gate multiplies the basis vector
\((z^{(1)},z^{(2)},z^{(3)})\) by the phase
\[
        \exp\!\left(
            \frac{2\pi i}{p}
            \operatorname{Tr}_{\K/\F_p}
            \left(\sum_{j=1}^n b_j z_j^{(1)}z_j^{(2)}z_j^{(3)}\right)
        \right)=  \exp\!\left(
            \frac{2\pi i}{p}
            \operatorname{Tr}_{\K/\F_p}
            \left(\sum_{i\in I} y_i^{(1)}y_i^{(2)}y_i^{(3)}\right)
        \right),
\]
where the equality follows by \eqref{eq:transversal-ccz-identity}. This is precisely the logical \(CCZ_\K\) phase.
\end{proof}

\subsection{Algebraic expander codes}
We now provide  the needed notation and results of the algebraic expander code construction given in \cite{KT26}.

Let \(u(X)\in\F[X]\) be nonconstant. Every polynomial \(f\in\F[X]\) has a unique expansion \(f(X)=\sum_i c_i(X)u(X)^i\) with \(\deg c_i<\deg u\). The \(u\)-base degree of \(f\) is \(\deg_u(f)=\max_i \deg c_i\), with the convention \(\deg_u(0)=-\infty\).

We will need the following result from \cite{KT26}.
\begin{lemma}[Subadditivity of base degree {\cite[Lemma~2.2]{KT26}}]
\label{lem:base-degree-subadditivity}
For all \(f_1,\ldots,f_t\in\F[X]\),
\[
        \deg_u(f_1\cdots f_t)
        \le
        \deg_u(f_1)+\cdots+\deg_u(f_t).
\]
\end{lemma}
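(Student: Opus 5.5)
The plan is to reduce to the case of two factors and then induct on \(t\). The case \(t=1\) is trivial. If some \(f_i=0\), the left side is \(-\infty\) and there is nothing to prove, so assume all \(f_i\) are nonzero. Granting the two-factor inequality \(\deg_u(fg)\le\deg_u(f)+\deg_u(g)\), induction gives
\[
\deg_u(f_1\cdots f_t)\le \deg_u(f_1\cdots f_{t-1})+\deg_u(f_t)\le \sum_{i}\deg_u(f_i).
\]
The whole argument therefore rests on the case \(t=2\).

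For \(t=2\), set \(d=\deg u\) and write \(f=\sum_i a_i u^i\) and \(g=\sum_j b_j u^j\) with \(\deg a_i\le \alpha:=\deg_u f\) and \(\deg b_j\le\beta:=\deg_u g\). Then
\[
fg=\sum_k \Big(\sum_{i+j=k} a_ib_j\Big)u^k=\sum_k e_k u^k,
\qquad \deg e_k\le \alpha+\beta .
\]
The catch is that this is not the base-\(u\) expansion: the \(e_k\) may have degree \(\ge d\), so they must be re-expanded and carries propagate to higher powers of \(u\).

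To handle the carries, I will prove an auxiliary claim. Say a polynomial \(h\) is \(D\)-bounded if every coefficient in its base-\(u\) expansion has degree \(\le D\). I claim that for any \(D\ge d-1\), the set of \(D\)-bounded polynomials is exactly the \(\F\)-span of the monomials \(X^s u^k\) with \(0\le s\le D\) and \(k\ge 0\). Given the claim, each \(a_ib_j u^{i+j}\) is a combination of \(X^s u^k\) with \(s\le\alpha+\beta\), so \(fg\) is \((\alpha+\beta)\)-bounded, which is the required inequality.

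The auxiliary claim is the main obstacle. The inclusion of \(D\)-bounded polynomials into the span is immediate. For the converse, it suffices to show that each \(X^s u^k\) with \(s\le D\) is \(D\)-bounded, since the set of \(D\)-bounded polynomials is a vector space (expansion is linear and unique). I will argue by strong induction on \(s\). If \(s<d\), then \(X^s u^k\) is already its own expansion. If \(d\le s\le D\), divide to get \(X^s=q\,u+r\) with \(\deg r<d\) and \(\deg q=s-d<s\). Then \(X^s u^k = r u^k + q u^{k+1}\). Here \(ru^k\) is a single digit of degree \(<d\le D+1\), and \(qu^{k+1}\) is a combination of \(X^{s'}u^{k+1}\) with \(s'<s\), each \(D\)-bounded by induction. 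So the sum is \(D\)-bounded.

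The hypothesis \(D\ge d-1\) is exactly what makes the remainder term \(r\) harmless. When \(\alpha+\beta<d-1\), I use the bound with \(D=\alpha+\beta\) directly: then each \(e_k\) has degree \(<d\) already, no carries occur, and the expansion \(\sum_k e_k u^k\) is already the base-\(u\) expansion.

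One more point: the inequality \(\alpha+\beta\le d-1\) does not by itself guarantee \(\alpha+\beta\ge d-1\), so I need the two regimes \(\alpha+\beta<d\) and \(\alpha+\beta\ge d-1\) to cover everything. They do, and the claim applies in the second. The degenerate case \(d=1\) needs no separate treatment: it forces \(\alpha=\beta=0\), and then \(\alpha+\beta=0<d\) falls into the no-carry regime.
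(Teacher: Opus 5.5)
Your proof is correct and follows the same route as the paper: reduce to $t=2$, and when $\alpha+\beta<d$ note that $\sum_k e_k u^k$ is already the base-$u$ expansion of $fg$. The auxiliary claim you call the main obstacle is vacuous, however. Every digit of a base-$u$ expansion has degree at most $d-1$ by definition, so for $D\ge d-1$ every polynomial is $D$-bounded (and the $X^s u^k$ with $s\le D$ already span $\F[X]$). Hence when $\alpha+\beta\ge d-1$ one simply has $\deg_u(fg)\le d-1\le\alpha+\beta$, which is the paper's one-line argument for its case $\alpha+\beta\ge d$, and the carry analysis can be dropped.
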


\begin{proof}
If one of the \(f_i\)'s is zero, the claim is immediate. It is enough to prove the case \(t=2\).
Let \(d=\deg u\). If \(\deg_u(f)+\deg_u(g)\ge d\), then the claim  follows  because every
\(u\)-base degree is at most \(d-1\). Otherwise, write \(f=\sum_i c_i u^i\) and \(g=\sum_j b_j u^j\), where \(\deg c_i\le \deg_u(f)\) and \(\deg b_j\le \deg_u(g)\). Then \(fg=\sum_k e_k u^k\), where \(e_k=\sum_{i+j=k}c_i b_j\). Each \(e_k\) has degree at most \(\deg_u(f)+\deg_u(g)<\deg u\), so this is already the
\(u\)-base expansion of \(fg\). Hence
\[
        \deg_u(fg)\le \deg_u(f)+\deg_u(g).
\]
\end{proof}
We use the algebraic expander code construction of \cite{KT26} in the case where the
global degree parameter equals the block length. The construction is built from two
subgroups of \(\operatorname{AGL}(1,\F)\): a translation subgroup
\[
        G=\{x\mapsto x+a:a\in U\}
\]
for an additive subgroup \(U\le(\F,+)\), and a scaling subgroup
\[
        H=\{x\mapsto bx:b\in B\}
\]
for a multiplicative subgroup \(B\le\F^\times\). Let
\(\mathcal A=\langle G,H\rangle\) be the group generated by \(G,H\), which is a subgroup of the affine general linear group 
\(\operatorname{AGL}(1,\F)\). Let \(\alpha\in\F\) be an element with trivial stabilizer in \(\mathcal A\), meaning
that no non-identity element of \(\mathcal A\) fixes \(\alpha\), and set
\(
        \Omega=\{\phi(\alpha):\phi\in\mathcal A\}.
\)
Thus \(\Omega\) is a regular \(\mathcal A\)-orbit, i.e., \(\mathcal A\) acts
transitively on \(\Omega\) and every point stabilizer is trivial. In particular,
\(
        N:=|\Omega|=|\mathcal A|.
\)

Define the invariant polynomials associated with the two  group actions
\[
        g(X)=\prod_{a\in U}(X-a),
        \qquad
        h(X)=X^{|H|}.
\]
It is easy to verify that the polynomial \(g\) is constant on \(G\)-orbits, and \(h\) is constant on
\(H\)-orbits. For \(r\in(0,1)\), the algebraic expander code is
\begin{equation}
\label{eq:ex-const}
        C_r=C(G,H;r,N)
        =
        \{(f(\omega))_{\omega\in\Omega}:f\in W_{r,N}\},
\end{equation}
where
\[
        W_{r,N}
        =
        \{f\in\F[X]:
            \deg f<N,\ 
            \deg_g(f)<r|G|,\ 
            \deg_h(f)<r|H|
        \}.
\]
For an element \(\omega\in\Omega\), let
\(G\cdot\omega=\{g(\omega):g\in G\}\) be the orbit of \(\omega\) under \(G\),
and similarly define \(H\cdot\omega\). Let \(\mathcal O\) be the collection of all
\(G\)-orbits and \(H\)-orbits in \(\Omega\). 
For \(O\in\mathcal O\), let \(R_O(r)\subseteq\F^O\) be the RS code
\[
        R_O(r)=\{(P(\omega))_{\omega\in O}:P\in\F[X],\ \deg P<k_O(r)\},
        \qquad
        k_O(r)=\lceil r|O|\rceil .
\]
Then \(R_O(r)\) has dimension \(k_O(r)\), and
\(
        \{c|_O:c\in C_r\}\subseteq R_O(r)       
\) for every \(O\in\mathcal O\).
In fact, \(C_r\) is the Tanner code defined by these local codes \(R_O(r),O\in \mathcal{O}\).

The properties of \(C_r\) are summarized
in the next theorem.
\begin{theorem}[{\cite[Theorem~4.3]{KT26}}]
\label{prop:principal-aec-tanner}
For any fixed \(m\ge 3\), \(r\in(0,1)\), and characteristic \(p\), there exists an explicit
family of algebraic expander codes \(C_r\) of length \(N\), defined over finite fields
\(\F_q\) of characteristic \(p\), as in \eqref{eq:ex-const}, with alphabet size
\(
    q
\) polynomial in \(N\),
and with the following properties as \(N\to\infty\):
\begin{enumerate}
    \item \(C_r\) is a Tanner code. Specifically,
    \[
        C_r
        =
        \{c\in\F_q^\Omega:c|_O\in R_O(r)\text{ for every }O\in\mathcal O\}.
    \]
    Consequently,
    \begin{equation}
        C_r^\perp=\sum_{O\in\mathcal O}R_O(r)^\perp,
        \label{eq:aec-dual-local-generation}
    \end{equation}
    where each \(R_O(r)^\perp\) is extended by zero outside \(O\).

    \item The local codes \(R_O(r)\) have length \(\Theta(N^{1/m})\), local rate at most
    \(r+o(1)\), and local relative distance at least \(1-r-o(1)\).

    \item The global relative distance is at least \((1-r)^2\).

    \item The global rate is at least
    \(
        \frac{r^{2m+1}}{(2m+1)!}.
    \)
\end{enumerate}
\end{theorem}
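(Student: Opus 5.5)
Theorem~\ref{prop:principal-aec-tanner} is cited from \cite{KT26}, so the plan is to reconstruct its proof along the lines of \cite{KT26}, deriving the duality formula \eqref{eq:aec-dual-local-generation} ourselves. First I fix the parameters. Choose \(\F=\F_q\) of characteristic \(p\). Take \(U\) an additive subgroup of size \(\approx N^{1/m}\), for instance an \(\F_p\)-subspace of suitable dimension. Take \(B\le\F^\times\) of order \(\approx N^{1/m}\). These are chosen so that \(\mathcal A=\langle G,H\rangle\) is the group \(\{x\mapsto bx+a: b\in B', a\in U'\}\), where \(U'\) is the \(B\)-stable additive hull of \(U\) and \(B'=B\). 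The dimensions are arranged so that \(|\mathcal A|=N\) and the orbit of a generic \(\alpha\) is regular; since \(|\mathcal A|\ll q\), such an \(\alpha\) exists by counting fixed points. Each \(G\)-orbit and \(H\)-orbit then has size \(|U|\) or \(|B|\), both \(\Theta(N^{1/m})\). This gives the first half of item~2. The rate and distance of \(R_O(r)\) are immediate from MDS, since \(k_O=\lceil r|O|\rceil\).

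For item~1, I show that \(C_r\) equals the Tanner code \(T\). The inclusion \(C_r\subseteq T\) holds as follows. On a \(G\)-orbit \(O\), the polynomial \(g\) is a constant \(\gamma\). So \(f=\sum c_i g^i\) restricts on \(O\) to \(\sum c_i\gamma^i\), which has degree \(\le\deg_g f<r|G|\). The same argument works for \(H\). The reverse inclusion is the heart of the proof. By dimension counting, it suffices to show \(\dim T\le\dim C_r\). Equivalently, interpolate \(c\in T\) by the unique \(f\) with \(\deg f<N\). Then show that the local degree conditions on every orbit force \(\deg_g f<r|G|\) and \(\deg_h f<r|H|\). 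I would do this by reducing \(f\) modulo \(g-\gamma\) along the \(G\)-orbits, using the fact that \(\prod_\gamma(g-\gamma)\) vanishes on \(\Omega\). This forces the coefficient polynomials \(c_i\) in the \(g\)-adic expansion to have small degree. Then I repeat the argument for \(h\), using the \(\mathcal A\)-structure so that the two reductions are compatible. Once \(C_r=T\), dualizing gives \(T^\perp=\sum_O R_O(r)^\perp\), since a Tanner code is the intersection of the spaces \(\{c:c|_O\in R_O\}\), whose duals are the zero-extended \(R_O^\perp\).

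Item~3 uses the standard expander-mixing argument on the bipartite graph whose vertices are \(G\)-orbits and \(H\)-orbits and whose edges are the points of \(\Omega\). Each point lies in exactly one orbit of each type. Let \(c\ne0\) and let \(S\) be its support. Every orbit meeting \(S\) meets it in at least \((1-r)|O|\) points by local distance. I need the spectral gap of this graph. The degrees are \(|B|\) and \(|U|\), and the second eigenvalue is controlled by character sums over \(\mathcal A\), which I expect to be \(o\) of the degree. Standard Tanner counting then gives \(|S|\ge(1-r)^2N(1-o(1))\). Getting the exact constant \((1-r)^2\) needs the spectral bound to be sharp enough.

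Item~4 is the \textbf{main obstacle}, because constraint counting fails for \(r\le 1/2\). I would instead lower-bound \(\dim W_{r,N}\) directly, which equals \(\dim C_r\) when \(N\le q\) is handled properly, by counting monomial-like basis elements. Write \(f=\sum c_{i}g^{i}\) and expand each piece further in \(h\)-adic form. Here \(g\) has degree \(|U|\) and \(h=X^{|B|}\). I would look for products \(X^{a}g^{i}h^{j}\) with small \(a\) whose \(g\)- and \(h\)-base degrees are both below the thresholds, using Lemma~\ref{lem:base-degree-subadditivity} to control products. This leads to counting lattice points in a simplex-like region of dimension about \(2m+1\), with each coordinate constrained at scale \(r\). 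That count gives a volume \(r^{2m+1}/(2m+1)!\) times \(N\). The remaining checks are that these elements are linearly independent and have degree \(<N\).
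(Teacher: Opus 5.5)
The paper gives no proof of this statement. It is imported from \cite{KT26}, and the only part derived here is \eqref{eq:aec-dual-local-generation}, which follows from the Tanner description exactly as in your dualization. So I am judging your reconstruction on its own merits: item~1 is sound, item~3 is missing its key estimate, and item~4 has a genuine gap.

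\textbf{Item 1 (fine).} It is simpler than you fear. Since $\Omega$ is a union of $G$-orbits and also a union of $H$-orbits, its vanishing polynomial equals both $\prod_{\gamma\in g(\Omega)}(g-\gamma)$ and $\prod_{\delta\in h(\Omega)}(h-\delta)$, each of degree exactly $N$. Hence the degree-$<N$ interpolant of $c\in T$ has fewer $g$-adic (resp.\ $h$-adic) terms than there are $G$-orbits (resp.\ $H$-orbits). The local conditions then force every high coefficient to vanish. The two base-degree conditions are checked independently, so no compatibility between the two reductions is needed.

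\textbf{Item 3 (missing estimate).} Items~3 and~4 both depend on a concrete choice of $U$ and $B$, which you never make. Write $\Omega\cong B\times U'$ via $(b,u)\mapsto b\alpha+u$. The two-step walk on $H$-orbits is then the random walk on $(U',+)$ with increment $b^{-1}a$, where $b\in B$ and $a\in U$ are uniform. So the eigenvalues of this walk (the squared normalized singular values of the orbit graph) are exactly $\Pr_{b\in B}[\chi\equiv 1\text{ on }b^{-1}U]$, with $\chi$ ranging over characters of $U'$. Whether the nontrivial ones are $o(1)$ is a real condition on $U$. For example, suppose $U$ is stable under a subgroup $B_0\le B$ of bounded index. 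Any nontrivial $\chi$ vanishing on $U$ exists, since $U\subsetneq U'$, and it gives an eigenvalue $\ge 1/[B:B_0]$. The mixing argument then yields nothing near $(1-r)^2$. So \emph{I expect the character sums to be $o$ of the degree} is precisely the estimate you have not supplied. Even with it, your route gives $(1-r)^2-o(1)$.

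\textbf{Item 4 (genuine gap).} The products $X^a g^i h^j$ you would certify via Lemma~\ref{lem:base-degree-subadditivity} cannot give linear dimension.
\begin{itemize}
\item Subadditivity only certifies $\deg_g\le a+j\deg_g(h)$ and $\deg_h\le a+i\deg_h(g)$.
\item Both $\deg_g(h)\ge 1$ and $\deg_h(g)\ge 1$: $h$ is not constant on $G$-orbits, and the separable linearized $g$ contains the monomial $X$.
\item Worse, when $r|G|\le |H|<|G|$ (e.g.\ $|H|=|G|-1$), $\deg_g(h)=|H|\ge r|G|$, so $h\notin W_{r,N}$ at all.
\item Hence $a,j<r|G|$ and $i<r|H|$. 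Every certified product then has ordinary degree $O(N^{2/m})$, so their span has dimension $O(N^{2/m})=o(N)$ for $m\ge 3$.
\end{itemize}
Three atoms cannot produce a $(2m+1)$-dimensional simplex, so the constant $r^{2m+1}/(2m+1)!$ is asserted rather than derived.

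What is missing is a supply of auxiliary polynomials with bounded $g$- and $h$-base degrees whose ordinary degrees run over the scales $1,|G|,\dots,|G|^{m-1}$. One such atom at the top scale is the subspace polynomial of $U'$: it equals $\Phi(g)$ for a linearized $\Phi$, and it is $\F_p(B)$-linearized, so both of its base degrees are at most $1$. Intermediate scales require further atoms. You would also need:
\begin{itemize}
\item a specific $U,B$ for which such atoms exist;
\item a proof that the resulting monomials are independent and of degree $<N$;
\item the lattice-point count.
\end{itemize}
That construction is the actual content of the rate bound in \cite{KT26}, and your sketch does not provide it.
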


\section{Puncturing theorem for transversal \texorpdfstring{$CCZ$}{CCZ}}
\label{sec:puncturing-theorem}
Our goal is to construct CSS codes with good parameters, supporting
transversal \(CCZ\) gates, and with low-weight \(Z\)-stabilizer generators.

Toward this goal, we give a mild refinement of the construction of
\cite[Theorem~3.1]{GG24}, which is itself an abstract version of the
construction in \cite{KrishnaTillich2019}. The proof is essentially the same as the proof of
\cite[Theorem~3.1]{GG24}, but we relax one of its assumptions. In particular,
the original theorem uses the condition \(|A|<d(C^\perp)\) to control the
distance after puncturing. We replace this global dual-distance condition by an
\(A\)-dependent distance condition. This is important for our application to
algebraic expander codes: although \(d(C^\perp)\) is only of order the local
orbit size, we can choose a much larger set \(A\) for which the relevant
punctured distance outside \(A\) remains large. This allows us to obtain a
linear-size logical set \(A\), and hence linear code dimension, whereas a
black-box application of \cite[Theorem~3.1]{GG24} would only give sublinear
dimension.

We will need the following notation.
Let $C\subseteq\F^\Omega$ be a linear code and let $A\subseteq\Omega$. Put $E=\Omega\setminus A$. 
Define the \(A\)-punctured distance
\[
    d_A(C)
    =
    \min\{|\supp(u)\setminus A|:u\in C,\ u|_A\ne 0\}.
\]
This quantity measures the part of a codeword that remains visible after the coordinates in \(A\) are removed, but only for codewords that actually interact with the logical coordinates in \(A\). This is the distance parameter that is relevant for the punctured CSS code, rather than the global minimum distance of the dual code.

\begin{theorem}[Refined puncturing theorem]\label{thm:refined-GG}
Let $C\subseteq\F^\Omega$ be a linear code and let $A\subseteq\Omega$. Suppose that
\begin{enumerate}[label=(\roman*)]
    \item $C|_A=\F^A$;
    \item $C\cap\F^A=\{0\}$;
    \item $C^{*3}\cap\F^A=\{0\}$.
\end{enumerate}
Let \(E=\Omega\setminus A\) and define \(Q_X=C^\perp|_E\) and \(Q_Z=C|_E\). Then \(Q=\operatorname{CSS}(Q_X,Q_Z)\) is a CSS code of length \(|E|\)
and dimension \(|A|\). Its two CSS distances are
\(
        d_X(Q)=d_A(C),       
        d_Z(Q)=d_A(C^\perp).
\)
In particular, its quantum distance is
\(
        d(Q)=\min\{d_A(C),d_A(C^\perp)\}.
\)
Moreover, \(Q\) supports a transversal \(CCZ\) gate over \(\F\).
\end{theorem}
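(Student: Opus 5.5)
The plan is to follow the Golowich--Guruswami argument, using the duality identities \eqref{eq:puncture-shorten-duality} throughout. One extra observation does most of the work: hypotheses (i) and (ii) make the relevant puncturing maps injective on both $C$ and $C^\perp$.

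\emph{CSS condition.} By \eqref{eq:puncture-shorten-duality},
\[
Q_X^\perp=(C^\perp|_E)^\perp=(C\cap\F^E)|_E\subseteq C|_E=Q_Z,
\]
so $Q$ is a CSS code of length $|E|$.

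\emph{Dimension.} The kernel of the map $C\to C|_E$ is $C\cap\F^A$, which is $\{0\}$ by (ii). Hence $\dim Q_Z=\dim C$.

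The restriction map $C\to\F^A$ is surjective by (i), and its kernel is $C\cap\F^E$. Hence $\dim(C\cap\F^E)=\dim C-|A|$. Since puncturing to $E$ is injective on $C\cap\F^E$, the same holds for $\dim Q_X^\perp$.

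Therefore $\dim Q=|A|$. This also suggests the natural $Z$-encoding
\[
\Enc_Z(y)=\{c|_E : c\in C,\ c|_A=y\}.
\]
It is nonempty by (i), and it is a single coset of $Q_X^\perp$, since two such words differ by an element of $(C\cap\F^E)|_E$.

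\emph{Distances.} For $d_X$, take $x\in Q_Z$. By (ii) it has a unique preimage $c\in C$ with $c|_E=x$. Then $x\in Q_X^\perp=(C\cap\F^E)|_E$ exactly when $c|_A=0$. So
\[
d_X(Q)=\min\{|\supp(c)\setminus A| : c\in C,\ c|_A\neq 0\}=d_A(C).
\]

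For $d_Z$, first dualize (i) using \eqref{eq:puncture-shorten-duality}:
\[
0=(\F^A)^\perp=(C|_A)^\perp=(C^\perp\cap\F^A)|_A,
\]
so $C^\perp\cap\F^A=\{0\}$. Hence each $z\in Q_X=C^\perp|_E$ has a unique preimage $u\in C^\perp$. Since $Q_Z^\perp=(C^\perp\cap\F^E)|_E$, we get $z\in Q_Z^\perp$ iff $u|_A=0$. This gives $d_Z(Q)=d_A(C^\perp)$.

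I expect this dualized uniqueness to be the step needing the most care. Without it, a logical $Z$-operator could have several lifts, and the minimum would not obviously be $d_A(C^\perp)$.

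\emph{Transversal $CCZ$.} Consider the restriction map $C^{*3}\to(C^{*3})|_E$. Its kernel is $C^{*3}\cap\F^A=\{0\}$ by (iii), so it is injective. Therefore the linear functional $w|_E\mapsto\sum_{i\in A}w_i$ is well defined on $(C^{*3})|_E$. Extend it to a functional on all of $\F^E$ and write it as $v\mapsto\sum_{j\in E}b_jv_j$; this defines the coefficient vector $b$.

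Now take labels $y^{(h)}$ and arbitrary representatives $z^{(h)}\in\Enc_Z(y^{(h)})$. Each representative has the form $z^{(h)}=c^{(h)}|_E$ with $c^{(h)}\in C$ and $c^{(h)}|_A=y^{(h)}$. Apply the functional to $w=c^{(1)}*c^{(2)}*c^{(3)}\in C^{*3}$:
\[
\sum_{i\in A}y^{(1)}_iy^{(2)}_iy^{(3)}_i=\sum_{j\in E}b_jz^{(1)}_jz^{(2)}_jz^{(3)}_j .
\]
This is exactly \eqref{eq:transversal-ccz-identity}. Lemma~\ref{lem:ccz-identity-implements-gate} then shows that the physical transversal gate implements logical $CCZ$.
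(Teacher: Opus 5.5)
Your proposal is correct and follows essentially the same route as the paper: the CSS containment via the puncturing--shortening duality, the coset encoding $\Enc_Z(y)=\{c|_E : c\in C,\ c|_A=y\}$, the unique-lift arguments for $d_X$ and $d_Z$ (with (i) dualized to $C^\perp\cap\F^A=\{0\}$), and the extension of the functional on $(C^{*3})|_E$ to define $b$. The only cosmetic difference is that you get $\dim Q=|A|$ by a dimension count (with bijectivity of $\Enc_Z$ then following from surjectivity), whereas the paper checks injectivity of $\Enc_Z$ directly.
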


\begin{proof}
By \eqref{eq:puncture-shorten-duality} we have 
\[
    Q_X^\perp=(C^\perp|_E)^\perp=(C\cap\F^E)|_E\subseteq C|_E=Q_Z,
\]
so the CSS condition holds.

Define the $Z$-encoding map for  \(y\in\F^A\) by
\[
    \operatorname{Enc}_Z(y)=\{c|_E:c\in C,\ c|_A=y\}, \qquad \operatorname{Enc}_Z:\F^A\rightarrow Q_Z/Q_X^\perp.
\]
The assumption $C|_A=\F^A$ ensures that \(\operatorname{Enc}_Z(y)\) is nonempty for every $y$. If two codewords $c,c'\in C$ have the same restriction to $A$, then $(c-c')|_A=0$, so $(c-c')|_E\in Q_X^\perp$. Hence $\operatorname{Enc}_Z$ is well defined as a map to $Q_Z/Q_X^\perp$. It is surjective by definition. If $\operatorname{Enc}_Z(y)$ is the zero coset, then there are $c\in C$ with $c|_A=y$ and $c'\in C\cap\F^E$ with $c'|_E=c|_E$. Thus $c-c'\in C\cap\F^A$ and $(c-c')|_A=y$. By assumption (ii), $y=0$. Hence the map is injective, and the CSS dimension is $|A|$.

We next compute the two CSS distances. A nontrivial \(X\)-type logical operator is
represented by an element of \(Q_Z\setminus Q_X^\perp\). Such an element has the
form \(c|_E\) for some \(c\in C\). It is trivial modulo \(Q_X^\perp\) exactly when
\(c|_A=0\). Indeed, if \(c|_A=0\), then \(c\in C\cap\F^E\), so
\(c|_E\in Q_X^\perp\). Conversely, if \(c|_E\in Q_X^\perp\), then there is
\(c'\in C\cap\F^E\) with \(c'|_E=c|_E\). Thus \(c-c'\in C\cap\F^A\), and
assumption (ii) forces \(c|_A=0\). Therefore
\[
        d_X(Q)=d_A(C).
\]

Similarly, assumption (i) implies \(C^\perp\cap\F^A=\{0\}\). Applying the same
argument to \(C^\perp\), a nontrivial \(Z\)-type logical operator is represented
by an element of \(Q_X\setminus Q_Z^\perp\), and its minimum possible weight is
\[
        d_Z(Q)=d_A(C^\perp).
\]
Thus the distance is
\(
        d(Q)=\min\{d_X(Q),d_Z(Q)\}
        =
        \min\{d_A(C),d_A(C^\perp)\}.
\)

It remains to prove the transversal $CCZ$ property. Since $C^{*3}\cap\F^A=\{0\}$, restriction to $E$ is injective on $C^{*3}$. Therefore the mapping
\[
    c|_E\longmapsto \sum_{a\in A} c_a,
    \qquad c\in C^{*3},
\]
defines a linear functional on $C^{*3}|_E$. Extend this functional arbitrarily to a linear functional on all of $\F^E$, and write it as $x\mapsto \sum_{e\in E} b_e x_e$ for some $b\in\F^E$.

Now take $y^{(1)},y^{(2)},y^{(3)}\in\F^A$ and representatives $c^{(h)}|_E\in\operatorname{Enc}_Z(y^{(h)})$, where $c^{(h)}\in C$ and $c^{(h)}|_A=y^{(h)}$. Then $c^{(1)}*c^{(2)}*c^{(3)}\in C^{*3}$, and by the construction of $b$,
\[
    \sum_{e\in E} b_e c^{(1)}_e c^{(2)}_e c^{(3)}_e
    =\sum_{a\in A} c^{(1)}_a c^{(2)}_a c^{(3)}_a
    =\sum_{a\in A} y^{(1)}_a y^{(2)}_a y^{(3)}_a.
\]
This is precisely the transversal-\(CCZ\) identity \eqref{eq:transversal-ccz-identity} for the logical
space \(\F^A\). By Lemma~\ref{lem:ccz-identity-implements-gate}, the corresponding physical
transversal gate implements logical \(CCZ\).\end{proof}

The following corollary is an immediate consequence of Theorem~\ref{thm:refined-GG}, and recovers Theorem~3.1 of \cite{GG24}.

\begin{corollary}[{\cite[Theorem~3.1]{GG24}}]\label{cor:GG-black-box}
Let $C \subseteq \F_q^n$ be an $[n,\ell]_q$ code such that $C$, $C^\perp$, and $C^{*3}$ have distances $d$, $d^\perp$, and $d^{(3)}$, respectively. For every $k<\min\{\ell,d,d^\perp,d^{(3)}\}$, 
 there is an \(
[[\,n-k,k,d_X\ge d-k,
        d_Z\ge d^\perp-k\,]]_{q}
\)
quantum CSS code that supports transversal $CCZ$.
In particular its  distance is at least
\(
        \min\{d,d^\perp\}-k .
\)
\end{corollary}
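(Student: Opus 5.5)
The plan is to apply Theorem~\ref{thm:refined-GG} to \(C\) with a well-chosen set \(A\) of size \(k\). We pick \(A\subseteq[n]\) with \(|A|=k\) that is an interpolation set for \(C\). Such an \(A\) exists since \(k<\ell=\dim C\): any \(\ell\) columns of a generator matrix that are linearly independent contain a subset of \(k\) independent columns. Alternatively, we can take \(A\) to be any \(k\)-set, since \(k<d^\perp\) already forces \(C|_A=\F^A\); otherwise some nonzero dual codeword would be supported on \(A\), contradicting \(d(C^\perp)=d^\perp>k\). We use this second argument because it needs no choice.

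Next we verify the hypotheses of Theorem~\ref{thm:refined-GG}.
\begin{itemize}
\item Condition (i) holds by the argument just given.
\item Condition (ii) holds because any nonzero \(c\in C\cap\F^A\) would have weight at most \(|A|=k<d\).
\item Condition (iii) holds in the same way: a nonzero element of \(C^{*3}\cap\F^A\) would have weight at most \(k<d^{(3)}\).
\end{itemize}
If \(C^{*3}\) happens to be the full space, its distance is \(1\) and the hypothesis \(k<d^{(3)}\) forces \(k=0\), a trivial case. The theorem then yields a CSS code of length \(n-k\) and dimension \(k\) supporting transversal \(CCZ\), with \(d_X=d_A(C)\) and \(d_Z=d_A(C^\perp)\).

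Finally we bound the punctured distances. For any \(u\in C\) with \(u|_A\ne0\) we have \(u\ne0\), so \(\wt(u)\ge d\) and hence \(|\supp(u)\setminus A|\ge d-|A|=d-k\). This gives \(d_X\ge d-k\). The identical argument applied to \(C^\perp\) gives \(d_Z\ge d^\perp-k\). Taking the minimum yields the stated quantum distance \(\min\{d,d^\perp\}-k\).

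There is no real obstacle here. The only points to watch are that a \(k\)-subset is automatically an interpolation set because \(k<d^\perp\), and the degenerate edge case for \(d^{(3)}\).
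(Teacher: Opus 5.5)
Your proof is correct and follows the paper's argument: apply Theorem~\ref{thm:refined-GG} to a $k$-set $A$, get conditions (ii) and (iii) from $k<d$ and $k<d^{(3)}$, and bound $d_A(C)\ge d-k$ and $d_A(C^\perp)\ge d^\perp-k$ because discarding the $|A|=k$ coordinates of $A$ removes at most $k$ from the weight. The only difference is in condition (i): the paper chooses $A$ as an interpolation set using $k<\ell$, whereas you note that any $k$-set works because $k<d^\perp$ rules out a nonzero dual codeword supported on $A$; this is valid and shows the hypothesis $k<\ell$ is not actually needed.
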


\begin{proof}
Let $\Omega = [n]$ be the index set of $C$. Choose a subset $A\subseteq\Omega$ of size $k<\ell$ such that $C|_A=\F^A$. The assumptions $k<d$ and $k<d^{(3)}$ imply $C\cap\F^A=\{0\}$ and $C^{*3}\cap\F^A=\{0\}$. 

For the distance terms, if \(c\in C\) and \(c|_A\ne 0\), then
\[
        |\supp(c)\setminus A|
        \ge \wt(c)-|A|
        \ge d-k.
\]
Thus \(d_A(C)\ge d-k\). Similarly,  \(d_A(C^\perp)\ge d^\perp-k\).
The result follows by Theorem~\ref{thm:refined-GG}, which gives the two 
bounds \(d_X\ge d-k\) and \(d_Z\ge d^\perp-k\). Taking the minimum gives the
 distance bound.
\end{proof}

\section{Instantiating the  puncturing theorem with algebraic expander codes}

\label{sec:aec-input}
Our plan is to instantiate Theorem~\ref{thm:refined-GG} with the algebraic expander code construction \cite{KT26}. In this section we prove some properties of these codes that will be used later  in
the quantum construction. 

Throughout this section, let \(m\geq 3, r\in (0,1)\) be fixed, and let  \(C_r=C(G,H;r,N)\subseteq\F^\Omega\) be the algebraic expander code from Theorem~\ref{prop:principal-aec-tanner} with length \(N\). In particular, \(C_r\) is the Tanner code
whose local constraints on the \(G\)- and \(H\)-orbits are RS codes of local dimension
\(\lceil r|O|\rceil\). This implies that the local constraints are of length at most \(|O|=O(N^{1/m})\).

\subsection{Schur powers}
\label{subsec:aec-schur-powers}

The first property we need is that  algebraic expander codes inherit the Schur-product
behavior of their local RS constraints.

\begin{lemma}[Schur-power containment]
\label{lem:schur-containment}
Let \(t\ge 1\) be fixed and suppose \(tr<1\). Then \(C_r^{*t}\subseteq C_{tr}\). Consequently, \(d(C_r^{*t})\ge d(C_{tr})=\Omega(N)\).
\end{lemma}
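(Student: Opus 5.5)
The plan is to use the Tanner description of algebraic expander codes from Theorem~\ref{prop:principal-aec-tanner}(1) rather than the polynomial description \eqref{eq:ex-const}. The polynomial route (multiply the $f_i\in W_{r,N}$ and apply Lemma~\ref{lem:base-degree-subadditivity} to $\deg_g$ and $\deg_h$) controls the base degrees. However, the product can have degree $\ge N$. Reducing it modulo $\prod_{\omega\in\Omega}(X-\omega)$ need not preserve the base-degree bounds.

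The Tanner route avoids this, because membership in $C_{tr}$ is checked orbit by orbit. Since $C_{tr}$ is built from the same $G,H,\Omega$ with local parameter $tr<1$, Theorem~\ref{prop:principal-aec-tanner}(1) gives
\[
C_{tr}=\{c\in\F^\Omega: c|_O\in R_O(tr)\ \text{for all } O\in\mathcal O\}.
\]
Both $C_r^{*t}$ and $C_{tr}$ are linear. It therefore suffices to show $c_1*\cdots*c_t\in C_{tr}$ for $c_1,\dots,c_t\in C_r$.

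Fix an orbit $O\in\mathcal O$ and write $n=|O|$. Since $c_i|_O\in R_O(r)$, there are polynomials $P_i$ with $\deg P_i\le \lceil rn\rceil-1$ and $c_i|_O=(P_i(\omega))_{\omega\in O}$. The Schur product restricted to $O$ is then the evaluation vector of $P=P_1\cdots P_t$ on $O$, because Schur multiplication is pointwise.

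The only real point is the rounding check $\deg P<\lceil trn\rceil$. We have $\lceil rn\rceil-1<rn$, so
\[
\deg P\le t(\lceil rn\rceil-1)<trn\le\lceil trn\rceil .
\]
Hence $\deg P<k_O(tr)$ and $(c_1*\cdots*c_t)|_O\in R_O(tr)$. No reduction modulo $\prod_{\omega\in O}(X-\omega)$ is needed. The condition $tr<1$ gives $k_O(tr)=\lceil trn\rceil\le n$, so $R_O(tr)$ is a genuine RS code of the stated dimension. Since $O$ was arbitrary, the Tanner characterization places the product in $C_{tr}$, and taking spans gives $C_r^{*t}\subseteq C_{tr}$.

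For the distance claim, $C_r^{*t}$ is a nonzero subcode of $C_{tr}$, so $d(C_r^{*t})\ge d(C_{tr})$. Theorem~\ref{prop:principal-aec-tanner}(3), applied with parameter $tr\in(0,1)$, gives $d(C_{tr})\ge(1-tr)^2N=\Omega(N)$ since $t$ and $r$ are fixed.

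The main obstacle is making sure the Tanner characterization is applied to $C_{tr}$ with the same group data $G,H$ and the same orbit set $\mathcal O$ as $C_r$. That is, Theorem~\ref{prop:principal-aec-tanner} must hold for the parameter $tr$ within the same explicit family. I would state this explicitly, noting that the family is indexed by $(G,H,N)$ with $r$ a free parameter in $(0,1)$. The rounding inequality above is the only computational step.
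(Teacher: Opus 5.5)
Your proof is correct and is essentially the paper's argument: on each $G$- and $H$-orbit you bound the degree of the product by $t(\lceil r|O|\rceil-1)<tr|O|\le k_O(tr)$, then invoke the Tanner characterization of $C_{tr}$ and part~(3) of Theorem~\ref{prop:principal-aec-tanner} for the distance. Your motivating worry about the polynomial route is unnecessary: $\prod_{\omega\in\Omega}(X-\omega)$ is itself a polynomial in $g$ and in $h$, so reducing modulo it does preserve the base-degree bounds. Either way, this does not affect your proof.
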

\begin{proof}
It is enough to check products of \(t\) codewords. Let
\(c^{(1)},\ldots,c^{(t)}\in C_r\). On every \(G\)-orbit \(O\), each \(c^{(i)}|_O\) is an evaluation of a polynomial of degree less than 
 \(k_O(r)=\lceil r|O|\rceil\). Hence the componentwise product
\[
        c^{(1)}|_O*\cdots*c^{(t)}|_O
\]
is an evaluation of a polynomial of degree at most \(t\bigl(k_O(r)-1\bigr)<t r|O|\le k_O(tr)\); in particular it lies in \(R_O(tr)\). The same argument applies on every
\(H\)-orbit. By Theorem~\ref{prop:principal-aec-tanner}
 \(c^{(1)}*\cdots*c^{(t)}\in C_{tr}\). Taking the span over all such products gives
\(C_r^{*t}\subseteq C_{tr}\). The distance bound follows from Theorem~\ref{prop:principal-aec-tanner}.
\end{proof}

For the construction of transversal \(CCZ\) codes we will use the case \(t=3\). Thus, whenever
\(r<1/3\), we have
\[
        C_r^{*3}\subseteq C_{3r}
        \qquad\text{and}\qquad
        d(C_r^{*3})=\Omega(N).
\]

\subsection{Exact dual distance and the limitation of the direct Golowich--Guruswami application}\label{subsec:aec-dual-distance-baseline}
We next compute the exact dual distance of the code \(C_r\). This result will be used as input to a black-box application of the Golowich--Guruswami framework \cite{GG24} (see Corollary~\ref{cor:GG-black-box}), and illustrates the  limitation of such a direct application. In particular, this application yields a CSS code whose dimension scales only as \(O(N^{1/m})\). Later, we build on this result with additional ideas to show that the dimension can in fact be made linear in the code length.

\begin{lemma}[Exact dual distance]
\label{lem:dual-distance}
Let \(k_r=\lceil rL\rceil\), where \(L=\min\{|G|,|H|\}\). Assume \(k_r<L\)
(which holds for all sufficiently large \(N\) for fixed \(r<1\)). Then
\(d(C_r^\perp)=k_r+1\).
\end{lemma}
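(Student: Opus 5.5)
Both inequalities come from the Tanner description of \(C_r\) in Theorem~\ref{prop:principal-aec-tanner}. For the upper bound \(d(C_r^\perp)\le k_r+1\), fix an orbit \(O\) with \(|O|=L\); every \(G\)-orbit has size \(|G|\) and every \(H\)-orbit has size \(|H|\), since point stabilizers in \(\mathcal A\) are trivial. The local code \(R_O(r)\) is an RS code of dimension \(k_O(r)=\lceil rL\rceil=k_r<L\). Hence \(R_O(r)^\perp\) is a nonzero MDS code whose minimum distance is \(k_r+1\). By \eqref{eq:aec-dual-local-generation}, any minimum-weight word of \(R_O(r)^\perp\), extended by zero outside \(O\), lies in \(C_r^\perp\). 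This gives the upper bound.

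For the lower bound \(d(C_r^\perp)\ge k_r+1\), I will show that every set \(S\subseteq\Omega\) with \(|S|\le k_r\) is an interpolation set for \(C_r\). Equivalently, \(C_r|_S=\F^S\), so no nonzero dual word is supported in \(S\). It suffices to produce, for each \(s\in S\), a codeword that equals \(1\) at \(s\) and \(0\) on \(S\setminus\{s\}\). The strategy is to write it as a product of polynomials in \(W_{r,N}\), each of small \(g\)-base degree and small \(h\)-base degree, and then use Lemma~\ref{lem:base-degree-subadditivity} to control the base degrees of the product.

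Concretely, for each \(s'\in S\setminus\{s\}\) I will take a linear factor vanishing at \(s'\) but not at \(s\). One option is \(X-s'\). Here \(\deg_g(X-s')\le 1\) when \(\deg g=|G|>1\), and similarly \(\deg_h(X-s')\le 1\). The product \(f=\prod_{s'\ne s}(X-s')/(s-s')\) then has degree at most \(k_r-1\). By subadditivity, its \(g\)-base and \(h\)-base degrees are at most \(k_r-1<r|G|\) and \(<r|H|\), since \(k_r-1<rL\le r\min\{|G|,|H|\}\). Also \(\deg f<N\). So \(f\in W_{r,N}\), and its evaluation is the desired indicator on \(S\). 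This shows that \(S\) is interpolating.

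The step I expect to be the main obstacle is confirming the bound \(\deg_u(X-s')\le 1\) directly from the definition of base degree. A linear polynomial with \(\deg u\ge 2\) is its own \(u\)-expansion, so its base degree is \(1\). For \(h=X^{|H|}\), with \(|H|\ge 2\) for large \(N\), this is immediate. The edge case of strict versus non-strict inequality with the ceiling \(\lceil rL\rceil\) also needs care. Here we need \(k_r-1<r|O|\) for all orbits, which follows from \(k_r-1<rL\) by the definition of ceiling.

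Once both bounds are in place, \(d(C_r^\perp)=k_r+1\). Since this is only \(\Theta(N^{1/m})\), Corollary~\ref{cor:GG-black-box} gives only \(O(N^{1/m})\) logical qudits.
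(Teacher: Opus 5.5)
Your proof is correct and is essentially the paper's. The upper bound (a minimum-weight word of the local dual code on an orbit of size \(L\)) is identical. Your Lagrange-interpolation lower bound is an unpacked form of the paper's observation that every polynomial of degree less than \(k_r\) lies in \(W_{r,N}\), so \(\RS_\Omega(k_r)\subseteq C_r\) and \(C_r^\perp\subseteq \RS_\Omega(k_r)^\perp\) has distance at least \(k_r+1\). The step you flag as the main obstacle is immediate: any polynomial of degree less than \(L\le\min\{|G|,|H|\}\) is its own \(g\)- and \(h\)-base expansion, so subadditivity is not needed.
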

\begin{proof}
Let \(\RS_\Omega(k_r)\) be the RS code obtained by evaluating polynomials of degree
less than \(k_r\) on \(\Omega\). We first show that \(\RS_\Omega(k_r)\subseteq C_r\). Indeed, for a nonnegative integer \(i\) if
\(0\le i<k_r\), then \(i<rL\). Since \(L\le |G|,|H|\), the \(g\)-base and \(h\)-base degrees of
\(X^i\) are both equal to \(i\). Hence \(X^i\) satisfies both local degree constraints defining
\(C_r\). Therefore every polynomial of degree less than \(k_r\) gives a codeword of \(C_r\).
This implies  that
\[
        C_r^\perp\subseteq \RS_\Omega(k_r)^\perp.
\]
Since \(\RS_\Omega(k_r)\) is an \([N,k_r]\) MDS code, its dual has minimum distance \(k_r+1\).
Thus \(d(C_r^\perp)\ge k_r+1\).

For the reverse inequality, let \(\Gamma\in\{G,H\}\) with \(|\Gamma|=L\). On every \(\Gamma\)-orbit, the
restriction of each codeword of \(C_r\) lies in a RS code of length \(L\) and dimension
\(k_r\). Hence every word in the dual of this local RS code, extended by zero outside
the orbit, belongs to \(C_r^\perp\). The dual of an \([L,k_r]\) RS code has a nonzero
word of weight \(k_r+1\). Therefore \(d(C_r^\perp)\le k_r+1\), and the claim follows.
\end{proof}

Combining the dual-distance computation with the Schur-power containment gives a direct black-box application of the Golowich--Guruswami framework \cite{GG24}.

\begin{corollary}
\label{cor:direct-instantiation}
Fix \(r<1/3\). Then the code \(C_r\) over \(\mathbb F_q\) gives a quantum CSS
code supporting transversal \(CCZ\) with parameters 
\[ [[\,N-O(N^{1/m}),\Omega(N^{1/m}),d_X=\Omega(N), d_Z=\Omega(N^{1/m})\,]]_{q}.
\]
\end{corollary}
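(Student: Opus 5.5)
The plan is to apply Corollary~\ref{cor:GG-black-box} to \(C=C_r\) with \(r<1/3\) fixed, choosing \(k\) to be a small constant fraction of \(k_r\). We identify the parameters. The length is \(n=N\). The dimension \(\ell=\dim C_r\) is \(\Omega(N)\) by item~4 of Theorem~\ref{prop:principal-aec-tanner}. The distance \(d=d(C_r)\ge (1-r)^2N\) by item~3. By Lemma~\ref{lem:dual-distance}, \(d^\perp=k_r+1\) with \(k_r=\lceil rL\rceil\), where \(L=\min\{|G|,|H|\}\). Finally, \(d^{(3)}=d(C_r^{*3})\ge d(C_{3r})=\Omega(N)\) by Lemma~\ref{lem:schur-containment} with \(t=3\), which applies since \(3r<1\).

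Next we need the order of \(L\). Item~2 of Theorem~\ref{prop:principal-aec-tanner} says every local code, i.e.\ every \(G\)- or \(H\)-orbit, has length \(\Theta(N^{1/m})\). Since \(\Omega\) is a regular \(\mathcal A\)-orbit and \(G,H\le\mathcal A\) act freely, every orbit has size \(|G|\) or \(|H|\). Hence \(L=\Theta(N^{1/m})\), and so \(d^\perp=\Theta(N^{1/m})\). In particular \(\min\{\ell,d,d^\perp,d^{(3)}\}=d^\perp\) for large \(N\).

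Now choose \(k=\lfloor d^\perp/2\rfloor\), or any \(k\le (1-\epsilon)d^\perp\). Then \(k<\min\{\ell,d,d^\perp,d^{(3)}\}\) and \(k=\Theta(N^{1/m})\). Corollary~\ref{cor:GG-black-box} then gives a code with transversal \(CCZ\) and the following parameters:
\begin{itemize}
\item length \(N-k=N-O(N^{1/m})\);
\item dimension \(k=\Omega(N^{1/m})\);
\item \(d_X\ge d-k\ge(1-r)^2N-O(N^{1/m})=\Omega(N)\);
\item \(d_Z\ge d^\perp-k\ge d^\perp/2=\Omega(N^{1/m})\).
\end{itemize}
This is the claimed statement.

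The only mildly delicate point is confirming that \(L=\Theta(N^{1/m})\), i.e.\ that both subgroup orders are of that order, rather than merely bounded above by it. We read this off item~2 of Theorem~\ref{prop:principal-aec-tanner}, using the fact that orbits in a regular orbit have exactly the subgroup's size. Everything else is direct substitution.
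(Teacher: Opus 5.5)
Your proof is correct and follows the paper's argument: apply Corollary~\ref{cor:GG-black-box} to \(C_r\), with inputs from Theorem~\ref{prop:principal-aec-tanner}, Lemma~\ref{lem:schur-containment} and Lemma~\ref{lem:dual-distance}, and take \(k\) a small constant multiple of \(L=\Theta(N^{1/m})\). Your check that both \(|G|\) and \(|H|\) are \(\Theta(N^{1/m})\), using the regularity of the orbit, fills in a step the paper only asserts.
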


\begin{proof}
By Theorem~\ref{prop:principal-aec-tanner}, \(C_r\) has linear dimension and linear distance. By
Lemma~\ref{lem:schur-containment}, \(C_r^{*3}\) has linear
distance. By Lemma~\ref{lem:dual-distance}, \(d(C_r^\perp)=\Theta(L)\). Applying
Corollary~\ref{cor:GG-black-box} with \(k\) chosen as a sufficiently small constant multiple of \(L\) gives a quantum
CSS code of length \(N-k=N-O(L)\), dimension \(k=\Omega(L)\), and distances \(d_X\geq d(C_r)-k,d_Z\geq d(C_r^\perp)-k\),
supporting transversal \(CCZ\). The result follows by noting that \(L=\Theta(N^{1/m})\).
\end{proof}

Thus, in the direct application of Corollary~\ref{cor:GG-black-box}, the dimension is limited by the dual distance
\(
    d(C_r^\perp)=\Theta(N^{1/m}).
\)
In the next subsection, we use an additional structural property of the code \(C_r\). This will allow us to construct a quantum CSS code with constant rate, improving on the vanishing rate  obtained by the direct black-box application above.

Finally, note that the parameters of the quantum CSS code in Corollary~\ref{cor:direct-instantiation} are worse than those of the quantum code constructed in~\cite{GG24}. Nevertheless, the present construction has one important advantage: it is possible to prove that the \(Z\)-stabilizers have low weight. We will prove this later for the improved construction that achieves constant rate.

\subsection{Preserving dual distance outside an interpolation set}\label{subsec:aec-robust-avoidance}
To improve the code dimension given in Corollary~\ref{cor:direct-instantiation}, we develop a robust puncturing argument for the nonzero codewords in \(C_r^\perp\). Specifically, we show the existence of a linear-sized set of coordinates that avoids the support of any nonzero dual codeword in a substantial number of positions. Consequently, after puncturing on this set, the minimum distance of the resulting punctured dual code remains a constant fraction of the minimum distance of \(C_r^\perp\).

The proof idea is the following. We show that if \(A\) is an interpolation set for a lower local-rate algebraic expander code, then no nonzero dual codeword of a higher local-rate code can be almost entirely supported on \(A\).

\begin{proposition}
\label{prop:dual-avoidance}
Let \(0<r_0<r<1\), and let \(A\subseteq\Omega\) be an interpolation set for \(C_{r_0}\), i.e.,  \(C_{r_0}|_A=\F^A\).  Then, for every nonzero \(z\in C_r^\perp\), \(|\supp(z)\setminus A|> \lfloor (r-r_0)L\rfloor\), hence 
\[
        d_A(C_r^\perp)>\lfloor (r-r_0)L\rfloor=\Omega(N^{1/m}) .
\]

\end{proposition}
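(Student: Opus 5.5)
The plan is to argue by contradiction, using the multiplicative structure of the local Reed--Solomon constraints. Fix a nonzero \(z\in C_r^\perp\) and put \(T=\supp(z)\setminus A\). Suppose \(|T|\le \lfloor (r-r_0)L\rfloor\). I will build a codeword \(c\in C_r\) with \(\langle c,z\rangle\neq 0\), which is impossible. The codeword \(c\) will be a Schur product \(c_0*p\). Here \(c_0\in C_{r_0}\) is chosen using the interpolation property of \(A\), and \(p=(P(\omega))_{\omega\in\Omega}\) is the evaluation of the vanishing polynomial
\[
P(X)=\prod_{t\in T}(X-t).
\]

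\emph{Step 1: reduce to \(z|_A\neq 0\).} If \(z|_A=0\), then \(T=\supp(z)\). Lemma~\ref{lem:dual-distance} then gives \(|T|\ge k_r+1> rL\ge (r-r_0)L\), so the claim already holds. Hence we may assume there is some \(a^*\in A\) with \(z_{a^*}\neq 0\). By assumption \(C_{r_0}|_A=\F^A\), so we can pick \(c_0\in C_{r_0}\) with \(c_0|_A=e_{a^*}\), the indicator vector of \(a^*\).

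\emph{Step 2: show \(c_0*p\in C_r\).} This is the main technical point. I would use the Tanner description in Theorem~\ref{prop:principal-aec-tanner}(1), so it suffices to check each orbit \(O\in\mathcal O\) separately. On \(O\), \(c_0|_O\) is the evaluation of a polynomial of degree at most \(k_O(r_0)-1\), and \(p|_O\) is the evaluation of \(P\), of degree \(|T|\). Their product is therefore the evaluation of a polynomial of degree at most \(\lceil r_0|O|\rceil-1+(r-r_0)L\). Since \(|O|\ge L\) and \(\lceil r_0|O|\rceil<r_0|O|+1\), this degree is strictly less than \(r_0|O|+(r-r_0)|O|=r|O|\le k_O(r)\). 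Hence \((c_0*p)|_O\in R_O(r)\) for every orbit \(O\), and so \(c_0*p\in C_r\). This is essentially the argument of Lemma~\ref{lem:schur-containment}, with one factor replaced by a low-degree global polynomial. The point to be careful about is that \(|T|\le (r-r_0)L\le(r-r_0)|O|\) uniformly over all orbits, and this is exactly where the floor in the statement comes from.

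\emph{Step 3: evaluate the inner product.} The product \(c_0*p\) vanishes on \(T\) because \(P\) does. It vanishes on \(A\setminus\{a^*\}\) because \(c_0\) does. The remaining coordinates lie outside \(\supp(z)\) and contribute nothing. Therefore
\[
\langle c_0*p,z\rangle=P(a^*)\,z_{a^*}.
\]
Since \(a^*\in A\) and \(T\cap A=\emptyset\), we have \(P(a^*)\neq0\), so \(\langle c_0*p,z\rangle\neq 0\). This contradicts \(z\in C_r^\perp\).

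It follows that \(|T|>\lfloor (r-r_0)L\rfloor\) for every nonzero \(z\in C_r^\perp\). Taking the minimum over those \(z\) with \(z|_A\neq0\) gives the bound on \(d_A(C_r^\perp)\). Finally, \(L=\Theta(N^{1/m})\) by Theorem~\ref{prop:principal-aec-tanner}(2), and \(r-r_0\) is a fixed positive constant, so the bound is \(\Omega(N^{1/m})\).
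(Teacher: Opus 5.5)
Your proof is correct and follows the paper's argument. You multiply an interpolating codeword of \(C_{r_0}\) by the vanishing polynomial of \(T=\supp(z)\setminus A\), check via the Tanner description that the product lies in \(C_r\), and contradict orthogonality, handling the case \(z|_A=0\) with Lemma~\ref{lem:dual-distance}. The differences are cosmetic: you check membership orbit-by-orbit with local degrees (as in Lemma~\ref{lem:schur-containment}) rather than by base-degree subadditivity, and you interpolate the single indicator \(e_{a^*}\) rather than an arbitrary \(y\) divided by \(P_T\).
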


\begin{proof}
We first prove an interpolation claim. Let \(s=\lfloor (r-r_0)L\rfloor\) and  \(T\subseteq\Omega\setminus A\) with \(|T|\le s\). We
claim that for every \(y\in\F^A\), there exists \(w\in C_r\) such that \(w|_A=y\) and \(w|_T=0\).

Let \(P_T(X)=\prod_{\beta\in T}(X-\beta)\) be the annihilator polynomial of the set \(T\). Since \(T\cap A=\emptyset\), we have \(P_T(\omega)\ne 0\) for every \(\omega\in A\). The assumption \(C_{r_0}|_A=\F^A\) implies that there exists  \(f\in W_{r_0,N}\) such that \(f(\omega)=y_{\omega}/P_T(\omega)\) for every \(\omega\in A\).
Let \(w\) be the evaluation vector of \(f\cdot P_T\) on \(\Omega\). Then, clearly \(w|_A=y\) and \(w|_T=0\).

It remains to show that \(w\) is a codeword of  \(C_r\). 
Note that although the global ordinary degree of \(f \cdot P_T\) may equal or exceed $N$, we will show that its \(g\)-base and \(h\)-base degrees guarantee that its evaluation vector $w$ satisfies all local Reed-Solomon constraints, and  therefore, $w \in C_r$ by the Tanner code characterization in Theorem~\ref{prop:principal-aec-tanner}.

Since \(|T|\le s<L\), the ordinary degree of \(P_T\) is
smaller than both \(|G|\) and \(|H|\). Hence
\(
        \deg_g(P_T)=\deg_h(P_T)=|T|.
\)
By the subadditivity of the base degree (Lemma~\ref{lem:base-degree-subadditivity}),
\[
        \deg_g(f\cdot P_T)< r_0|G|+|T|
        \le r_0|G|+(r-r_0)L
        \le r|G|,
\]
and similarly \(\deg_h(f\cdot P_T)<r|H|\). Therefore \(f\cdot P_T\) satisfies all local constraints defining
the  Tanner code \(C_r\), and hence \(w\in C_r\).

Now let \( z\in C_r^\perp\) be a  dual codeword, and set \(T=\supp(z)\setminus A\). Suppose that \(|T|\le s\). By the interpolation claim, for every \(y\in\F^A\) there is
\(w\in C_r\) such that \(w|_A=y\) and \(w|_T=0\). Therefore, 
\[
        0=\langle z,w\rangle=\sum_{a\in A} z_a y_a+\sum_{e\in \Omega\setminus A}z_e w_e=\sum_{a\in A} z_a y_a,
\]
where the second sum vanishes because \(\supp(z)\setminus A=T\) and \(w|_T=0\).
Since this holds for any \(y\in\F^A\) it follows that \(z|_A=0\). Therefore \(\supp(z)\subseteq T\), so
\(\wt(z)\le s<\lceil rL\rceil+1=d(C_r^\perp)\) by Lemma~\ref{lem:dual-distance}. Hence, necessarily $z=0$. This implies that if \(z\) is nonzero, then \(|\supp(z)\setminus A|\ge s+1\).
\end{proof}
\begin{remark}
\label{prop:explicit-A}
The interpolation set \(A\) in Proposition~\ref{prop:dual-avoidance} can be constructed deterministically in polynomial time by greedily selecting coordinates corresponding to linearly independent columns of a generator matrix of \(C_{r_0}\). This ensures that \(C_{r_0}|_A = \mathbb{F}^A\).
\end{remark}

\begin{lemma}
\label{lem:dual-avoidance-upper}
Let \(0<r_0<r<1/3\), and let \(A\subseteq\Omega\) be a nonempty interpolation
set for \(C_{r_0}\). Then, for all sufficiently large \(N\),
\[
        d_A(C_r^\perp)
           =
        \Theta(N^{1/m}).
\]

\end{lemma}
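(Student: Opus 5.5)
The lower bound is already given by Proposition~\ref{prop:dual-avoidance}. Since \(r-r_0\) is a fixed positive constant and \(L=\Theta(N^{1/m})\) by Theorem~\ref{prop:principal-aec-tanner}, we have
\[
d_A(C_r^\perp)>\lfloor (r-r_0)L\rfloor=\Omega(N^{1/m})
\]
for all large \(N\). So the work is to show the matching upper bound \(d_A(C_r^\perp)=O(N^{1/m})\). For this it suffices to exhibit one \(z\in C_r^\perp\) with \(z|_A\ne 0\) and \(\wt(z)=O(N^{1/m})\), because \(|\supp(z)\setminus A|\le\wt(z)\).

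The candidates are the local dual words. By \eqref{eq:aec-dual-local-generation}, every word of \(R_O(r)^\perp\), extended by zero outside \(O\), lies in \(C_r^\perp\), and its weight is at most \(|O|=O(N^{1/m})\) by item 2 of Theorem~\ref{prop:principal-aec-tanner}. Since \(A\) is nonempty, fix \(a\in A\) and let \(O\) be the \(G\)-orbit (or \(H\)-orbit) containing \(a\). It remains to find a word of \(R_O(r)^\perp\) that is nonzero at \(a\).

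The dual of the RS code \(R_O(r)\) is a generalized RS code of dimension \(|O|-k_O(r)\). This dimension is positive, since \(k_O(r)=\lceil r|O|\rceil<|O|\) for large \(N\) because \(r<1/3<1\) and \(|O|\to\infty\). A nonzero GRS code has no identically-zero coordinate. Concretely, its dual \(R_O(r)\) contains a codeword of weight \(|O|-k_O(r)+1\) vanishing at \(a\) only if \(\ldots\); the cleanest argument is to note that if every word of \(R_O(r)^\perp\) vanished at \(a\), then the unit vector \(e_a\) would lie in \((R_O(r)^\perp)^\perp=R_O(r)\). That is impossible, since \(R_O(r)\) has minimum distance \(|O|-k_O(r)+1\ge 2\). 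Hence some \(z\in R_O(r)^\perp\) has \(z_a\ne 0\). Extending it by zero gives \(z\in C_r^\perp\) with \(z|_A\ne0\) and \(|\supp(z)\setminus A|\le|O|=O(N^{1/m})\).

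Combining the two bounds gives \(d_A(C_r^\perp)=\Theta(N^{1/m})\). The only mildly delicate step is making sure the local dual word actually touches \(A\). The unit-vector argument above handles this, using only that the local code has distance at least \(2\). The hypothesis \(r<1/3\) is not needed beyond \(r<1\); it is presumably stated for consistency with the later application.
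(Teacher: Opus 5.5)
Your proof is correct. Its skeleton matches the paper's: the lower bound comes from Proposition~\ref{prop:dual-avoidance}, and the upper bound comes from a local dual word on an orbit through some \(a\in A\). The upper-bound step, however, is done differently.

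\textbf{The paper's upper bound.} The paper uses the interpolation property to get \(|A\cap O|\le k_O(r_0)\). It then uses \(r_0+r<1\) to pick \(T\subseteq O\setminus A\) with \(|T|=k_O(r)\). The MDS property of \(R_O(r)\) then gives a dual word supported exactly on \(\{a\}\cup T\). This yields the sharper bound \(d_A(C_r^\perp)\le k_O(r)\). With \(O\) an orbit of the smaller group, it even gives \(\lfloor (r-r_0)L\rfloor< d_A(C_r^\perp)\le\lceil rL\rceil\).

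\textbf{Your upper bound.} You bound \(|\supp(z)\setminus A|\) crudely by \(|O|\). You then only need some local dual word with \(z_a\ne 0\), and the argument \(e_a\notin R_O(r)\) supplies it. For this step you use neither the interpolation property of \(A\) nor \(r_0+r<1\). You need only \(A\ne\emptyset\) and \(k_O(r)<|O|\). The constant-factor loss is harmless, since every orbit has size \(\Theta(N^{1/m})\).

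\textbf{What each buys.} Your route is simpler and uses fewer hypotheses. The paper's route gives a tighter constant. It also produces a dual word that meets \(A\) only at \(a\), with the rest inside a prescribed information set in \(O\setminus A\). That is exactly the structure reused for the gluing checks \(\eta_a\) in Proposition~\ref{prop:Z-low-weight}.

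One cosmetic point: delete the dangling fragment ``vanishing at \(a\) only if \(\ldots\)''. The unit-vector argument that follows it is complete on its own.
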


\begin{proof}
By  Proposition~\ref{prop:dual-avoidance} we only need to show that 
\(d_A(C_r^\perp)=O(N^{1/m})\).
Choose \(a\in A\), and let \(O\in\mathcal O\) be one of the two local orbits
containing \(a\). Since \(A\) is an interpolation set for \(C_{r_0}\), the set
\(A\cap O\) has size at most \(k_O(r_0)\). Indeed, every \(c\in C_{r_0}\) satisfies \(c|_O\in R_O(r_0)\), so
\(C_{r_0}|_{A\cap O}\subseteq R_O(r_0)|_{A\cap O}\), and the latter space has dimension at most \(\dim R_O(r_0)=k_O(r_0)\).
On the other hand \(C_{r_0}|_{A\cap O}=\F^{A\cap O}\), because \(A\) is an interpolation set for \(C_{r_0}\).
Hence \(|A\cap O|\le k_O(r_0)\).

Since \(r_0+r<1\), for all sufficiently large \(N\) we have
\[
        k_O(r_0)+k_O(r)\le |O|.
\]
Thus \(O\setminus A\) contains a subset \(T\) of size \(k_O(r)\). The local code
\(R_O(r)\) is MDS, so its dual contains a nonzero word supported on
\(\{a\}\cup T\), with nonzero coordinate at \(a\). Extending this word by zero
outside \(O\) gives an element of \(C_r^\perp\). It is nonzero on \(A\), and its
support outside \(A\) is contained in \(T\). Hence
\[
        d_A(C_r^\perp)\le |T|=k_O(r).
\]
Since every local orbit has size \(O(N^{1/m})\), the claimed upper bound follows.
\end{proof}

We now present and prove our main construction of quantum codes over a growing alphabet.
\begin{theorem}[Main growing-alphabet construction]
\label{thm:main}
For any fixed \(m\geq 3\) and prime \(p\), there exists an explicit family of  quantum CSS codes with parameters
\[
    [[\,\Theta(N),\Theta(N),d_X=\Theta(N), d_Z=\Theta(N^{1/m})\,]]_{q},
\]
where \(N\) is the length of the underlying algebraic expander code. The codes support transversal \(CCZ\) gates, and their \(Z\)-stabilizer generators have weight  \(O(N^{1/m})\), where  \(q\) is
  a power of \(p\) and grows polynomially with 
\(N\).

\end{theorem}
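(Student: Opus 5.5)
We instantiate Theorem~\ref{thm:refined-GG} with \(C=C_r\). Fix constants \(0<r_0<r<1/3\), for instance \(r_0=r/2\) with \(r=1/4\). Let \(A\subseteq\Omega\) be an interpolation set for \(C_{r_0}\) of size \(|A|=\dim C_{r_0}\), constructed greedily as in Remark~\ref{prop:explicit-A}. By Theorem~\ref{prop:principal-aec-tanner}(4), \(|A|\ge \frac{r_0^{2m+1}}{(2m+1)!}N=\Theta(N)\). We then check the hypotheses of Theorem~\ref{thm:refined-GG} in turn.

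\textbf{Condition (i).} Since \(W_{r_0,N}\subseteq W_{r,N}\), we have \(C_{r_0}\subseteq C_r\), so \(C_r|_A=\F^A\).

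\textbf{Condition (ii).} Suppose \(c\in C_r\) is supported in \(A\). Then \(c|_O\in R_O(r)\) vanishes on \(O\setminus A\). As in the proof of Lemma~\ref{lem:dual-avoidance-upper}, \(|A\cap O|\le k_O(r_0)\), so \(c|_O\) vanishes on at least \(|O|-k_O(r_0)\ge k_O(r)\) points for large \(N\). Because \(R_O(r)\) is MDS, this forces \(c|_O=0\) on every orbit, hence \(c=0\). Here we use \(r_0+r<1\).

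\textbf{Condition (iii).} The same argument applies to \(C_r^{*3}\subseteq C_{3r}\) (Lemma~\ref{lem:schur-containment}). Now we need \(|O|-k_O(r_0)\ge k_O(3r)\), which holds for large \(N\) because \(r_0+3r<1\).

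The easy key point is that both (ii) and (iii) follow from the per-orbit bound \(|A\cap O|\le k_O(r_0)\), and not from global distance.

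\textbf{Parameters.} The resulting code has length \(|E|=N-|A|\le N\). This length is still \(\Theta(N)\), since \(|A|\le \dim C_{r_0}\le N-d(C_{r_0})\) is bounded away from \(N\) (alternatively, \(|E|\ge d(C_r)\)). The dimension is \(|A|=\Theta(N)\).

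For \(d_X=d_A(C_r)\): if \(c\in C_r\) satisfies \(c|_A\ne0\), then \(c\ne0\), so \(|\supp(c)\setminus A|\ge d(C_r)-|A|\). This crude bound may be negative, so we argue per orbit instead. Take an orbit \(O\) on which \(c|_O\neq 0\). It has at least \((1-r-o(1))|O|\) nonzeros, of which at most \(k_O(r_0)\) lie in \(A\). This is still only local, so to get a linear bound we argue more globally. Let \(c'\) agree with \(c\) off \(A\), and consider the orbits \(O\) where \(c|_{O\setminus A}\ne0\). On any orbit where \(c|_{O\setminus A}=0\), the argument from (ii) gives \(c|_O=0\). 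Hence \(\supp(c)\) lies in the union of orbits meeting \(\supp(c)\setminus A\), and every orbit meeting \(\supp(c)\) meets \(\supp(c)\setminus A\) in at least \((1-r-r_0-o(1))|O|\) points. We then rerun the expander-mixing distance argument of \cite{KT26} for the Tanner graph, with local relative distance replaced by \(1-r-r_0\). This should yield \(|\supp(c)\setminus A|=\Omega(N)\), and the upper bound \(O(N)\) is trivial. We expect this to be the main obstacle: it requires opening the global distance proof of \cite{KT26} rather than using it as a black box. The thing to try first is the observation that the vector \(c\) restricted to \(E\) satisfies local constraints with local distance \((1-r-r_0)|O|\), which is exactly the input the expander distance argument needs.

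For \(d_Z=d_A(C_r^\perp)=\Theta(N^{1/m})\), we invoke Lemma~\ref{lem:dual-avoidance-upper} directly.

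\textbf{Low-weight generators and transversal \(CCZ\).} By Theorem~\ref{thm:refined-GG}, the code supports transversal \(CCZ\). The \(Z\)-stabilizers are \(Q_Z^\perp=(C_r|_E)^\perp=(C_r^\perp\cap\F^E)|_E\), i.e.\ the dual words vanishing on \(A\). We use \eqref{eq:aec-dual-local-generation}, namely \(C_r^\perp=\sum_O R_O(r)^\perp\). First, the local dual words supported on \(O\setminus A\) form a space of dimension \(\dim R_O(r)^\perp\) minus \(|A\cap O|\), up to the constraint count. Second, we cancel the \(A\)-coordinates: each \(a\in A\) lies in exactly one \(G\)-orbit \(O\) and one \(H\)-orbit \(O'\), and combinations of local duals on \(O\cup O'\) can kill the coordinate at \(a\). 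The plan is to show that these single-orbit and two-orbit words vanishing on \(A\) span \(C_r^\perp\cap\F^E\). This is a linear-algebra/dimension count: take any \(z=\sum_O z_O\) with \(z|_A=0\), and use the fact that \(A\cap O\) has at most \(k_O(r_0)<|O|-k_O(r)\) points to locally adjust each \(z_O\) so that its \(A\)-values are pushed to the partner orbit. Every such generator has weight at most \(2\max|O|=O(N^{1/m})\). Finally, \(q\) is polynomial in \(N\) by Theorem~\ref{prop:principal-aec-tanner}, and the construction is explicit.
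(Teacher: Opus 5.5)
\textbf{Gap: the $X$-distance.} You take $|A|=\dim C_{r_0}$, worry that $d_A(C_r)\ge d(C_r)-|A|$ might be negative, and then defer $d_X=\Omega(N)$ to re-running an expander-mixing distance argument from \cite{KT26} with local distance $1-r-r_0$. Nothing available supports that step. Theorem~\ref{prop:principal-aec-tanner} gives only the conclusion $d(C_r)\ge(1-r)^2N$; it gives no expansion parameter for the orbit graph. What you would need is a statement of the form ``every nonempty coordinate set meeting each incident orbit in a $(1-r-r_0)$-fraction has linear size,'' and that is not among the stated properties. As written, the linear $X$-distance, which is the main feature of the theorem, is not proved.

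\textbf{The fix: keep $|A|$ a linear margin below $d(C_r)$.} The paper takes $|A|=\lfloor\alpha N\rfloor$, with $\alpha$ smaller than the rate of $C_{r_0}$ and smaller than the relative distances of $C_r$ and $C_{3r}$. Any subset of an interpolation set is again one, so this is allowed. Then (ii) and (iii) follow from $|A|<d(C_r)$ and $|A|<d(C_{3r})$, and $d_A(C_r)\ge d(C_r)-|A|=\Omega(N)$ is immediate.

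Your own choice of $A$ also works once you bound $|A|$ from above. Expanding $f\in W_{r_0,N}$ in base $g$ uses $N/|G|$ digits, each of degree $<r_0|G|$, so $|A|=\dim C_{r_0}\le (N/|G|)\lceil r_0|G|\rceil\le r_0N+N/|G|$. Meanwhile $d(C_r)\ge(1-r)^2N$ and $(1-r)^2>4/9>r_0$. So the ``crude'' bound is already $\Omega(N)$.

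\textbf{The rest is sound.} Your per-orbit proofs of (ii) and (iii) are a valid alternative to the paper's global-distance argument. Note that (iii) needs $r_0+3r<1$; this holds for your $r=1/4$, $r_0=1/8$, but is not implied by $r_0<r<1/3$ alone. Your generator sketch is the paper's idea. To make it precise, fix information sets $I_O\subseteq O\setminus A$ of size $k_O(r)$, and take the basis $h_{O,b}$ of $R_O(r)^\perp$ with $h_{O,b}$ supported on $I_O\cup\{b\}$. Then only $h_{G\cdot a,a}$ and $h_{H\cdot a,a}$ touch $a\in A$, and the cancellation $z_a=0$ produces multiples of $h_{G\cdot a,a}-h_{H\cdot a,a}$.
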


\begin{proof}
Fix constants \(0<r_0<r<1/3\), and let \(C_r=C(G,H;r,N)\) be the code from Theorem~\ref{prop:principal-aec-tanner}. Set \(L=\min\{|G|,|H|\}\). Let \(R_0\) be the rate of \(C_{r_0}\), and let \(\delta_r,\delta_{3r}>0\) denote the relative minimum distances of \(C_r\) and \(C_{3r}\), respectively.

Choose a constant \(0<\alpha<\min\{R_0,\delta_r,\delta_{3r}\}\). Since \(\alpha<R_0\), we have \(\lfloor \alpha N\rfloor\leq \dim C_{r_0}\). By Remark~\ref{prop:explicit-A}, applied to \(C_{r_0}\), there exists an efficiently constructible set \(A\subseteq\Omega\) of size \(\lfloor \alpha N\rfloor\) such that \(C_{r_0}|_A=\F^A\). Since \(C_{r_0}\subseteq C_r\), it follows that \(C_r|_A=\F^A\).

Moreover, since \(|A|=\lfloor\alpha N\rfloor<\delta_r N=d(C_r)\), we have \(C_r\cap \F^A=\{0\}\). Similarly, Lemma~\ref{lem:schur-containment} gives \(C_r^{*3}\subseteq C_{3r}\), and since \(|A|<\delta_{3r}N=d(C_{3r})\), we also have \(C_r^{*3}\cap \F^A=\{0\}\).

Finally, 
by Lemma~\ref{lem:dual-avoidance-upper},
\(
        d_A(C_r^\perp)=
        \Theta(N^{1/m}).
\)
Moreover,
\[
        d_A(C_r)
        =
        \min\{|\supp(c)\setminus A|:c\in C_r,\ c|_A\neq 0\}
        \ge d(C_r)-|A|
        \ge \delta_r N-|A|=\Omega(N),
\]
since \(\alpha<\delta_r\).

Applying Theorem~\ref{thm:refined-GG} with \(C=C_r\), we obtain a CSS code of
length
\(
        N-|A|=\Theta(N)
\),
dimension \(|A|=\Theta(N)\), and transversal \(CCZ\). The two CSS distances then  satisfy
\[
        d_X
        =
        d_A(C_r)=\Theta(N),\qquad d_Z=d_A(C_r^\perp)=
        \Theta(N^{1/m}).
\]
It remains only to prove the claim on the weight of the \(Z\)-stabilizer generators. This will be established in Section~\ref{sec:one-sided-locality}, specifically in Proposition~\ref{prop:Z-low-weight}.
\end{proof}

\paragraph{Asymmetric interpretation and a comparison with a direct sum of quantum RS codes.}
The construction is asymmetric both in its logical distances and in the weights
of its stabilizer generators. The larger distance is the \(X\)-distance,
\[
        d_X=\Theta(N),
\]
whereas
\[
        d_Z=\Theta(N^{1/m}).
\]
At the same time, the low-weight generators we obtain are \(Z\)-stabilizer
generators, which are the checks used to detect \(X\)-type errors. Thus the two
asymmetric features are aligned: the code has stronger protection against
\(X\)-type errors, and the syndrome checks relevant for those errors have
sublinear weight. This makes the construction most naturally suited to biased
noise models in which \(X\)-type errors are the dominant error mechanism.

This distinction is hidden if one considers only the usual quantum distance
\(
        d=\min\{d_X,d_Z\}.
\)
At that level, the construction has length \(\Theta(N)\), dimension
\(\Theta(N)\), distance \(\Theta(N^{1/m})\), transversal \(CCZ\), and
\(Z\)-stabilizer generators of weight \(O(N^{1/m})\). These  parameters
can also be obtained by taking \(\Theta(N^{1-1/m})\) disjoint copies of a
quantum RS code of length \(\Theta(N^{1/m})\), constant rate, and
linear \(X\)- and \(Z\)-distances, with each block supporting transversal
\(CCZ\). The resulting direct-sum code has total length \(\Theta(N)\),
dimension \(\Theta(N)\), and
\[
        d_X=d_Z=\Theta(N^{1/m}).
\]
Moreover, both its \(X\)- and \(Z\)-stabilizer spaces admit generating sets of
weight \(O(N^{1/m})\).

The distinction of our construction is therefore not its usual minimum
distance or its stabilizer locality alone, but its asymmetric distance profile:
one CSS distance is linear in the total block length, while the other is
sublinear. The construction is most naturally viewed as a family of
asymmetric CSS codes with transversal \(CCZ\) and one-sided sublinear
stabilizer locality.

\subsection{Sublinear \texorpdfstring{$Z$}{Z}-stabilizer generators}
\label{sec:one-sided-locality}
In this section we show that the quantum CSS code constructed in Theorem~\ref{thm:main} has low-weight \(Z\)-stabilizers, i.e., its \(Z\)-stabilizers can be generated by low-weight checks.

This property does not follow automatically from the fact that the algebraic expander code \(C_r\) is a Tanner code and that \(Q_Z^\perp=(C_r^\perp\cap \F^E)|_E\), where \(E=\Omega\setminus A\). Indeed, \(Q_Z^\perp\) is obtained by shortening \(C_r^\perp\) at \(A\), and shortening a locally generated code does not, in general, preserve local generation.

To overcome this issue, we rely on a structural property of the set \(A\), namely the interpolation property \(C_{r_0}|_A=\F^A\) (see Proposition~\ref{prop:dual-avoidance}). This property forces \(A\) to be sparse within every local orbit. Exploiting this sparsity, we show that \(Q_Z^\perp\) admits a generating set consisting of low-weight checks.

Recall that for \(0<r<1\), \(C_r=C(G,H;r,N)\), and \(\mathcal O\) denotes the collection of all
\(G\)- and \(H\)-orbits in \(\Omega\).
Moreover, for \(O\in\mathcal O\), \(R_O(r)\subseteq\F^O\) denotes the local RS code appearing in the Tanner definition of the code 
\(C_r\) (see Theorem~\ref{prop:principal-aec-tanner}), and write \(k_O(r)=\dim R_O(r)\). We view \(R_O(r)^\perp\) as a subspace of
\(\F^\Omega\) by extending vectors by zero outside \(O\).

\begin{proposition}[Low-weight \(Z\)-stabilizers]
\label{prop:Z-low-weight}
For the quantum CSS code given in Theorem~\ref{thm:main}, the \(Z\)-stabilizer
space \(Q_Z^\perp\) admits a generating set consisting of checks of weight at
most \(|G|+|H|=O(N^{1/m})\).
\end{proposition}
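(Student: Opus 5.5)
The plan is to produce an explicit generating set for \(Q_Z^\perp=(C_r^\perp\cap\F^E)|_E\) consisting of vectors of \(C_r^\perp\) that vanish on \(A\) and are supported either on a single orbit \(O\in\mathcal O\) or on the union of the \(G\)-orbit and \(H\)-orbit through a point \(a\in A\). Such vectors have weight at most \(|G|+|H|\). Their restrictions to \(E\) lie in \(Q_Z^\perp\), so only spanning needs proof. The starting point is the local generation \eqref{eq:aec-dual-local-generation}: every \(z\in C_r^\perp\) can be written as \(z=\sum_{O\in\mathcal O} z_O\) with \(z_O\in R_O(r)^\perp\). The difficulty is that the individual \(z_O\) need not vanish on \(A\), even though \(z|_A=0\).

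The first step is a local surjectivity statement. Fix an orbit \(O\). By the argument in the proof of Lemma~\ref{lem:dual-avoidance-upper}, \(|A\cap O|\le k_O(r_0)\). Since \(r_0+r<1\), for large \(N\) we have \(k_O(r_0)\le |O|-k_O(r)=\dim R_O(r)^\perp\). The code \(R_O(r)^\perp\) is a generalized RS code, hence MDS, so any set of at most \(\dim R_O(r)^\perp\) coordinates is an information set. Therefore restriction \(R_O(r)^\perp\to\F^{A\cap O}\) is surjective. For each \(a\in A\cap O\) we then choose \(u_{O,a}\in R_O(r)^\perp\) with \(u_{O,a}|_{A\cap O}=e_a\). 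Extended by zero, \(u_{O,a}\) vanishes on \(A\setminus\{a\}\).

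Next comes the decomposition. Given \(z=\sum_O z_O\in C_r^\perp\cap\F^E\), write
\[
z_O=\Bigl(z_O-\sum_{a\in A\cap O} z_O(a)\,u_{O,a}\Bigr)+\sum_{a\in A\cap O} z_O(a)\,u_{O,a}.
\]
The bracketed term lies in \(R_O(r)^\perp\), vanishes on \(A\), and is supported on \(O\); these are the single-orbit generators. The remainder collects as
\[
w=\sum_{a\in A}\bigl(\lambda_a u_{G\cdot a,a}+\mu_a u_{H\cdot a,a}\bigr),
\]
where \(\lambda_a=z_{G\cdot a}(a)\) and \(\mu_a=z_{H\cdot a}(a)\). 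We use here that each \(a\) lies in exactly one \(G\)-orbit and exactly one \(H\)-orbit. Since \(z\) and all single-orbit pieces vanish on \(A\), so does \(w\). Evaluating \(w\) at \(a\in A\), only the two terms indexed by \(a\) contribute, which gives \(\lambda_a+\mu_a=0\). Hence \(w=\sum_a\lambda_a(u_{G\cdot a,a}-u_{H\cdot a,a})\). Each summand is in \(C_r^\perp\), vanishes on \(A\) (the \(e_a\) entries cancel), and is supported on \((G\cdot a)\cup(H\cdot a)\); these are the two-orbit generators.

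Restricting everything to \(E\) shows that these vectors span \(Q_Z^\perp\), with weights at most \(|G|+|H|=O(N^{1/m})\) by Theorem~\ref{prop:principal-aec-tanner}. The only delicate point is the local surjectivity step, which is exactly where the sparsity \(|A\cap O|\le k_O(r_0)\) of the interpolation set and the inequality \(k_O(r_0)+k_O(r)\le|O|\) are needed. The rest is bookkeeping.
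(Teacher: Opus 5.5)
Your proposal is correct and is essentially the paper's proof. The paper fixes an information set \(I_O\subseteq O\setminus A\) of \(R_O(r)\) and uses the resulting systematic basis \(h_{O,b}\) of \(R_O(r)^\perp\); its rows \(h_{O,a}\), \(a\in A\cap O\), are a particular choice of your lifts \(u_{O,a}\). It then cancels the coefficients at each \(a\in A\) to form the gluing checks \(\eta_a=h_{G\cdot a,a}-h_{H\cdot a,a}\), exactly as you do, and your single-orbit generators (a basis of \(R_O(r)^\perp\cap\F^{O\setminus A}\)) span the same space as its rows \(h_{O,b}\), \(b\in E\).
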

More explicitly, the generating set consists of two types of checks. The first type is given by a basis for each dual local code whose support is entirely contained in \(E\); these checks are supported on a single orbit. The second type consists of ``gluing'' checks \(\eta_a\) for each \(a\in A\), where each such check is supported on the union of one \(G\)-orbit and one \(H\)-orbit, and therefore its weight is at most   \(|G|+|H|\).

\begin{proof}
Recall that \(A\subseteq\Omega\) satisfies \(C_{r_0}|_A=\F^A\), where \(0<r_0<r<1/3\). Furthermore, the CSS code is
\[
        Q_X=C_r^\perp|_E,
        \qquad
        Q_Z=C_r|_E, \qquad
E=\Omega\setminus A.
\]
Then, by puncturing-shortening duality \eqref{eq:puncture-shorten-duality},
\(
        Q_Z^\perp
        =
        (C_r|_E)^\perp
        =
        (C_r^\perp\cap \F^E)|_E .
\)
Thus, it is enough to construct low-weight generators for \(C_r^\perp\cap \F^E\).

For every orbit \(O\in\mathcal O\), the set \(A\cap O\) has size at most \(k_O(r_0)\). Indeed, \(C_{r_0}|_{A\cap O}\subseteq R_O(r_0)|_{A\cap O}\) has dimension at most \(k_O(r_0)\), while \(C_{r_0}|_{A\cap O}=\F^{A\cap O}\) since \(A\) is an interpolation set for \(C_{r_0}\); hence \(|A\cap O|\le k_O(r_0)\).
Next, since \(r_0 < r < 1/3\), we have $r_0+r < 2/3$. For sufficiently large \(N\) (and hence large orbit size \(|O| = \Theta(N^{1/m})\)), it holds that \(k_O(r_0)+k_O(r) = \lceil r_0|O| \rceil + \lceil r|O| \rceil \leq |O|\).
Therefore, 
\[
      |O\setminus A|=|O|-  |A\cap O|\ge |O|- k_O(r_0)\ge k_O(r).
\]
Hence \(O\setminus A\) contains an information set of size \(k_O(r)\) for the local MDS code \(R_O(r)\).

Fix such an information set \(I_O\subseteq O\setminus A, |I_O|=k_O(r)\), for every orbit \(O\in\mathcal O\). Since \(R_O(r)\) is MDS, every coordinate in \(O\setminus I_O\) is a linear combination of the coordinates in \(I_O\). Equivalently, the dual space \(R_O(r)^\perp\) has a basis
\[
        \{h_{O,b}: b\in O\setminus I_O\},
\]
where \(h_{O,b}\) is supported on \(I_O\cup\{b\}\) and is normalized so that \((h_{O,b})_b=1\). We view these rows as vectors in \(\F^\Omega\), extended by zero outside \(O\).

Since \(I_O\subseteq E\), the rows \(h_{O,b}\) with \(b\in E\) are supported entirely in \(E\). The rows with \(b\in A\cap O\) are the only rows from the orbit \(O\) that have a nonzero coordinate in \(A\). In particular, for each \(a\in A\), there are exactly two such rows touching the coordinate \(a\): one from the \(G\)-orbit through \(a\), i.e., \(h_{G\cdot a,a}\), and one from the \(H\)-orbit through \(a\), i.e., \(h_{H\cdot a, a}\). 
Both have \(a\)-coordinate equal to \(1\), and neither has any other nonzero coordinate in \(A\).
Therefore
\(
        \eta_a:=h_{G\cdot a,a}-h_{H\cdot a,a}
\)
belongs to \(C_r^\perp\cap \F^E\). Moreover,
\(
        \supp(\eta_a)
\) is contained in the union of the orbits \(G\cdot a\) and \(H\cdot a\), 
so \(\wt(\eta_a)\le |G|+|H|\).

We now prove that these two sets of rows supported inside \(E\), i.e.,
\begin{equation}
\label{eq:generating-set}
        \{h_{O,b}: O \in \mathcal{O},  b\in (O\setminus I_O)\cap E\}\text { and } \{\eta_a:a\in A\},
\end{equation}
generate \(C_r^\perp\cap \F^E\). Let \(z\in C_r^\perp\cap \F^E\). Since \(C_r\) is a Tanner
code, equivalently \(C_r^\perp\) is generated by the local dual spaces \(R_O(r)^\perp\), \(z\) can be written as 
\[
        z=\sum_{O\in\mathcal O}\sum_{b\in O\setminus I_O}
        \lambda_{O,b} h_{O,b}=\sum_{O\in\mathcal O}\sum_{b\in (O\setminus I_O)\cap E}
        \lambda_{O,b} h_{O,b}+ \sum_{O\in\mathcal O}\sum_{a\in (O\setminus I_O)\cap A}
        \lambda_{O,a} h_{O,a},
\]
for some scalars \(\lambda_{O,b}\). Each summand in the first double sum is in the first set \(\{h_{O,b}: O \in \mathcal{O},  b\in (O\setminus I_O)\cap E\}\). We show next that the second double sum is spanned by the \(\eta_a\)'s. For a fixed \(a\in A\), the only two rows in the above expansion
that can contribute to the \(a\)-coordinate are \(h_{G\cdot a,a}\) and \(h_{H\cdot a,a}\). Since \(z\) is supported on \(E\), its \(a\)-coordinate is zero. Hence
\(
        \lambda_{G\cdot a,a}+\lambda_{H\cdot a,a}=0
\) since the \(a\)-coordinate of both \(h_{G\cdot a,a}\) and \(h_{H\cdot a,a}\) is \(1\). Therefore,
\[
        \lambda_{G\cdot a,a}h_{G\cdot a,a}+\lambda_{H\cdot a,a}h_{H\cdot a,a}
        =
        \lambda_{G\cdot a,a}(h_{G\cdot a,a}-h_{H\cdot a,a})
        =
        \lambda_{G\cdot a,a}\eta_a .
\]
Thus every contribution involving a coordinate of \(A\) can be replaced by a multiple of the
corresponding row \(\eta_a\), and hence the second double sum is spanned by the \(\eta_a\)'s.
This proves that \(C_r^\perp\cap\F^E\) is generated by the elements in \eqref{eq:generating-set}.
Restricting these generators to \(E\) gives a generating set for \(Q_Z^\perp\).

Each local row \(h_{O,b}\) is supported on a single orbit, and hence has weight at most
\(\max\{|G|,|H|\}\). Each row \(\eta_a\) is supported on the union of one \(G\)-orbit and one
\(H\)-orbit, and hence has weight at most \(|G|+|H|\), and the result follows. 
\end{proof}

\begin{remark}[\(X\)-stabilizers weight]
In contrast to the \(Z\)-stabilizers, the situation for the \(X\)-stabilizers is much worse. Indeed, the \(X\)-stabilizer space is \(Q_X^\perp=(C_r\cap\F^E)|_E\).

Every nonzero element of this space arises from a nonzero codeword of \(C_r\) that vanishes on \(A\), and therefore has weight at least \(d(C_r)=\Omega(N)\), i.e., linear in the code length.
\end{remark}

\section{Alphabet reduction}
\label{sec:one-shot-alphabet-reduction}

A caveat of the quantum CSS codes constructed in Theorem~\ref{thm:main} is
that they are defined over an alphabet whose size grows polynomially  with the code
length.

In this section, we address this issue by reducing the alphabet directly to the
prime field%
\footnote{One can similarly reduce the alphabet to any intermediate subfield.
For clarity, we only treat the reduction to the prime field.}.
We do so by concatenating with multiplication-friendly inner quantum CSS codes and applying
the Golowich--Guruswami alphabet-reduction operation
\cite[Proposition~5.1]{GG24} as a black box.

Beyond this black-box application, we also need to track the weights of the
\(Z\)-stabilizer generators.  Specifically, we show that the low-weight
\(Z\)-stabilizer generating set established in
Proposition~\ref{prop:Z-low-weight} survives the alphabet reduction, with each
generator weight increasing by at most a factor equal to the length of the
inner multiplication-friendly quantum CSS code.

Finally, for the inner codes, we use a projective-multiplicity generalization
of the multiplication-friendly codes employed in \cite{GG24,GJX17}.  Unlike the
multiplicity-one affine construction, this variant works over arbitrary field
extensions and in arbitrary characteristic, including characteristic \(2\). 

\subsection{Multiplication-friendly inner codes}Let $\mathbb{K} = \mathbb{F}_{q^\kappa}$. 
A collection of classical codes \((C^{(h)};\enc^{(h)})_{h=1}^a\), with
\(C^{(h)}=\operatorname{im}(\enc^{(h)})\subseteq \mathbb F_q^\nu\), is
\(a\)-multiplication-friendly for $\mathbb{K}/\mathbb{F}_q$ if each $\enc^{(h)}: \mathbb{K} \to \mathbb{F}_q^\nu$ is an $\mathbb{F}_q$-linear injective map, and there exists an $\mathbb{F}_q$-linear map $\dec: \mathbb{F}_q^\nu \to \mathbb{K}$ such that$$ \dec(\enc^{(1)}(x_1) * \cdots * \enc^{(a)}(x_a)) = x_1 \cdots x_a $$for all $x_1, \ldots, x_a \in \mathbb{K}$. 

We also use the corresponding quantum notion. A collection
\(
        \bigl(Q_{\rm in}^{(h)}
        =
        \operatorname{CSS}(Q_X^{(h)},Q_Z^{(h)};\Enc_Z^{(h)})\bigr)_{h=1}^a
\)
of CSS codes over \(\F_q\), 
with \(Q_X^{(h)},Q_Z^{(h)}\subseteq\F_q^\nu\) and \(\Enc_Z^{(h)}:\mathbb K\xrightarrow{\sim} Q_Z^{(h)}/{Q^{(h)}_X}^\perp\), is
\(a\)-multiplication-friendly for \(\K/\F_q\) if there exists an
\(\F_q\)-linear map
\(
        \Dec_Z:\F_q^\nu\to\K
\)
such that, for all \(x_1,\ldots,x_a\in\K\) and all representatives
\(z^{(h)}\in \Enc_Z^{(h)}(x_h)\), it holds that 
\[
        \Dec_Z\bigl(z^{(1)}*\cdots*z^{(a)}\bigr)
        =
        x_1\cdots x_a .
\]

A collection of classical multiplication-friendly codes $(C^{(h)}; \enc^{(h)})_{h=1}^a$ gives a quantum multiplication-friendly code via the trivial CSS
conversion
\[
        Q_{\rm in}^{(h)}
        =
        \operatorname{CSS}(\F_q^\nu,C^{(h)};\enc^{(h)}),
\]
for each \(h\in[a]\).
Indeed, since
\((\F_q^\nu)^\perp=\{0\}\), the \(Z\)-encoding cosets are single codewords,
and the quantum multiplication-friendly identity is exactly the classical one.
Each such  CSS code has parameters
\(
        [[\,\nu,\kappa,\ge 1\,]]_q
\) 
when \(\K=\F_{q^\kappa}\), and its \(Z\)-stabilizer space is
\((C^{(h)})^\perp\), which has a generating matrix with row weight at most
\(\nu\).

%Furthermore, every $\mathbb{F}_q$-linear functional on $\mathbb{K}$ has a physical $Z$-logical representative of weight at most $\nu$.

For our transversal $CCZ$ application, we use arity $a=4$. The first three encoders are applied to the three quantum blocks, while the fourth encoder is used to expand the coefficient vector in the transversal $CCZ$ identity.

To achieve the desired parameters in the alphabet reduction, we use a projective-multiplicity generalization of the RS multiplication-friendly construction used in
\cite{GJX17,GG24}.  The generalization is a univariate multiplicity
interpolation construction over \(\mathbb P^1(\mathbb F_q)\): it allows
derivative evaluations with prescribed local multiplicities and incorporates
the point at infinity.  The full  construction and proof are deferred to
Appendix~\ref{app:projective-multiplicity-mfc}.

\begin{proposition}[Projective-multiplicity multiplication-friendly codes]\label{prop:projective-multiplicity-mfc}Let $q$ be a prime power, and let $a, \kappa \ge 1$. Set $D = a(\kappa - 1)$. Let $(\ell_P)_{P \in \mathbb{P}^1(\mathbb{F}_q)}$ be nonnegative integers satisfying $\sum_{P \in \mathbb{P}^1(\mathbb{F}_q)} \ell_P = D + 1$. Then there is an $a$-multiplication-friendly collection of classical $[\nu, \kappa]_q$ codes for $\mathbb{F}_{q^\kappa}/\mathbb{F}_q$, where$$ \nu = \sum_{P \in \mathbb{P}^1(\mathbb{F}_q)} N_{a,\kappa}(\ell_P), \qquad N_{a,\kappa}(\ell) = \#\{(i_1, \ldots, i_a) \in \{0, \ldots, \kappa - 1\}^a : i_1 + \cdots + i_a < \ell\}. $$\end{proposition}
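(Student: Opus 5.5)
The plan is the standard polynomial-evaluation construction, using Hasse derivatives to get multiplicities and a homogenized treatment of the point at infinity. Identify \(\K=\F_{q^\kappa}\) with \(\F_q[X]^{<\kappa}\) as an \(\F_q\)-vector space, for instance via \(\K\cong \F_q[X]/(\pi(X))\) with \(\pi\) irreducible of degree \(\kappa\). Each element \(x\) is then represented by its unique reduced polynomial \(f_x\) of degree \(<\kappa\). A product \(x_1\cdots x_a\) is represented by \(f_{x_1}\cdots f_{x_a}\bmod \pi\), and this product polynomial has degree at most \(D=a(\kappa-1)\).

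\textbf{Step 1 (a linear recovery map).} The space \(\F_q[X]^{\le D}\) has dimension \(D+1\). Consider the map sending \(F\) to the tuple of local data \(\bigl(F^{[j]}(P)\bigr)_{j<\ell_P}\) at each \(P\in\mathbb P^1(\F_q)\). Here \(F^{[j]}\) denotes the \(j\)-th Hasse derivative. At \(P=\infty\) we use the coefficients of \(X^{D},X^{D-1},\dots,X^{D-\ell_\infty+1}\), i.e.\ the Hasse derivatives of the reversed polynomial \(X^DF(1/X)\) at \(0\). I would show this map is injective, hence bijective since \(\sum_P\ell_P=D+1\). If all data vanish, then \(F\) is divisible by \(\prod_{P\ne\infty}(X-P)^{\ell_P}\). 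Moreover \(\deg F\le D-\ell_\infty\), so a polynomial of degree at most \(D-\ell_\infty\) would be divisible by a polynomial of degree \(D+1-\ell_\infty\), forcing \(F=0\). Hasse derivatives make this valid in every characteristic, including \(2\). Composing the inverse of this bijection with reduction mod \(\pi\) then yields an \(\F_q\)-linear map from local data to \(\K\).

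\textbf{Step 2 (encoders and the product identity).} This is the step I expect to be the main obstacle: we must encode each factor separately so that the coordinatewise Schur product of the encodings linearly determines the local data of the product. The Leibniz rule for Hasse derivatives gives
\[
(f_1\cdots f_a)^{[j]}(P)=\sum_{i_1+\cdots+i_a=j} \prod_h f_h^{[i_h]}(P).
\]
Accordingly, for each \(P\) let \(\enc^{(h)}(x)\) contain the coordinate \(f_{x}^{[i_h]}(P)\) placed at position \((P;i_1,\dots,i_a)\), for every tuple in \(\{0,\dots,\kappa-1\}^a\) with \(i_1+\cdots+i_a<\ell_P\). The Schur product at that position is then \(\prod_h f_{x_h}^{[i_h]}(P)\). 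Summing over the tuples with a fixed sum \(j<\ell_P\) recovers \((f_1\cdots f_a)^{[j]}(P)\). Derivatives of order \(\ge\kappa\) vanish, which is why only indices up to \(\kappa-1\) appear. The same construction at \(\infty\) uses the degree-\((\kappa-1)\) homogenization \(X^{\kappa-1}f(1/X)\), whose product is exactly the degree-\(D\) reversal of the product. This gives \(\nu=\sum_P N_{a,\kappa}(\ell_P)\) coordinates, and \(\dec\) is the linear map that performs these summations and then applies Step 1.

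\textbf{Step 3 (remaining checks).} Each encoder is \(\F_q\)-linear by construction. Injectivity of each \(\enc^{(h)}\) follows from the decoding identity: taking \(x_{h'}=1\) for \(h'\ne h\) gives \(\dec(\enc^{(h)}(x)*\prod_{h'\ne h}\enc^{(h')}(1))=x\). The image therefore has dimension \(\kappa\), so each code is \([\nu,\kappa]_q\). The only delicate points are the bookkeeping of the homogenization at \(\infty\) and making sure the chosen product representative has degree \(\le D\) before reduction, so that the bijection of Step 1 applies.
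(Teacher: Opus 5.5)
Your proposal is correct and follows the paper's proof essentially step for step. It uses the same identification $\K\cong\F_q[X]^{<\kappa}$ and the same encoders indexed by $(P;i_1,\ldots,i_a)$: Hasse derivatives at finite points and top coefficients at $\infty$ (your homogenization $X^{\kappa-1}f(1/X)$ gives exactly the paper's $\partial^{(i)}_{\infty,\kappa-1}f=[X^{\kappa-1-i}]f$). It also uses the same decoder, which sums over tuples of fixed total order and then inverts the multiplicity-interpolation bijection on polynomials of degree at most $D$, and the same root-counting proof of that bijection.

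The only difference is how you prove injectivity of $\enc^{(h)}$. You deduce it from the decoding identity with $x_{h'}=1$ for $h'\ne h$, whereas the paper argues directly by root counting with $t_P=\min\{\ell_P,\kappa\}$. Your shortcut is valid and slightly cleaner.
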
\begin{proof}See Appendix~\ref{app:projective-multiplicity-mfc}.\end{proof}

\begin{remark}[Choice of multiplicities]
\label{rem:balanced-multiplicities}
The construction requires a multiplicity profile
\((\ell_P)_{P\in\mathbb P^1(\mathbb F_q)}\) satisfying
\(
\sum_{P\in\mathbb P^1(\mathbb F_q)} \ell_P = a(\kappa-1)+1 .
\)
For such a profile, the length of the inner code is
\[
\nu=
\sum_{P\in\mathbb P^1(\mathbb F_q)} N_{a,\kappa}(\ell_P)
=
\sum_{P\in\mathbb P^1(\mathbb F_q)}\#\{(i_1,\ldots,i_a)\in\{0,\ldots,\kappa-1\}^a:
i_1+\cdots+i_a<\ell_P\}.
\]
In particular, \(N_{a,\kappa}(\ell)\le \kappa^a\) for every \(\ell\), and
therefore every choice of multiplicities gives \(\nu\le (q+1)\kappa^a\).

A convenient choice is to distribute the multiplicities as evenly as possible.
Write \(a(\kappa-1)+1=u(q+1)+b\), where \(0\le b<q+1\). Assign multiplicity
\(u+1\) to \(b\) projective points and multiplicity \(u\) to the remaining
\(q+1-b\) points. This choice gives \(\nu=O(q\kappa^a)\). Moreover, when \(q\)
and \(a\) are fixed, it gives \(\nu=\Theta(\kappa^a)\): indeed, at least one
projective point has multiplicity \(\Theta(\kappa)\), and then
\(N_{a,\kappa}(\ell)=\Theta(\kappa^a)\) for \(\ell=\Theta(\kappa)\).
In particular, for the \(CCZ\) reduction below we take \(a=4\) and fixed
\(q=p\), obtaining \(\nu=\Theta(\kappa^4)\).

When \(a(\kappa-1)+1\le q+1\), one may assign multiplicity \(1\) to
\(a(\kappa-1)+1\) projective points and multiplicity \(0\) to the remaining
points. This recovers the standard RS multiplication-friendly
construction, with the minor modification that the point at infinity may also
be used as an evaluation point.

The multiplicity variant becomes useful in the fixed-\(q\) regime, where
\(\kappa\) grows and ordinary RS interpolation over
\(\mathbb F_q\) no longer provides sufficiently many evaluation points.
Moreover, the projective formulation is sometimes necessary even in very small
parameter regimes. For example, when \(q=a=\kappa=2\), ordinary 
RS interpolation is impossible, and the point at infinity must be
used in order to obtain the required multiplication-friendly property.
\end{remark}

\subsection{Alphabet reduction with low-weight \texorpdfstring{$Z$}{Z}-stabilizers}
\label{subsec:alphabet-reduction-z-locality}

We now apply CSS code concatenation together with the restriction operation
from Lemma~\ref{lem:css-restriction}. Let \(\K=\F_{p^\kappa}\), and view
\(\K\) as a \(\kappa\)-dimensional vector space over \(\F_p\). Let
\[
        Q_{\rm in}
        =
        \operatorname{CSS}
        (Q_{X,{\rm in}},Q_{Z,{\rm in}};
        \Enc_{X,{\rm in}},\Enc_{Z,{\rm in}})
\]
be an \(\F_p\)-ary CSS code with logical space \(\K\) and parameters
\(
        [[\,\nu,\kappa,d_{X, \rm in}, d_{Z, \rm in}\,]]_p .
\)
Let
\[
        Q_{\rm out}
        =
        \operatorname{CSS}
        (Q_{X,{\rm out}},Q_{Z,{\rm out}};
        \Enc_{X,{\rm out}},\Enc_{Z,{\rm out}})
\]
be a \(\K\)-ary CSS code with logical space \(\K^{k_{\rm out}}\) and parameters
\(
        [[\,n_{\rm out},k_{\rm out},d_{X, \rm out}, d_{Z, \rm out}\,]]_{\K}.
\)

We choose the inner \(X/Z\)-encoding maps to be compatible with respect to the
trace bilinear form
\(
        (x,z)\longmapsto \operatorname{Tr}_{\K/\F_p}(xz),
\)
and the outer \(X/Z\)-encoding maps to be compatible with respect to the
standard \(\K\)-bilinear form on \(\K^{k_{\rm out}}\). With these choices, the
concatenated code
\(
        Q_{\rm in}\circ Q_{\rm out}
\)
is the \(\F_p\)-ary CSS code whose encoding maps are
\[
        \Enc_Z
        =
        (\Enc_{Z,{\rm in}})^{\oplus n_{\rm out}}
        \circ
        \Enc_{Z,{\rm out}},
        \qquad
        \Enc_X
        =
        (\Enc_{X,{\rm in}})^{\oplus n_{\rm out}}
        \circ
        \Enc_{X,{\rm out}},
\] 
and it has parameters
\[
        \left[\left[
        \nu n_{\rm out},\
        \kappa k_{\rm out},\
        d_X\ge d_{X,\rm in}d_{X,\rm out},\
        d_Z\ge d_{Z,\rm in}d_{Z,\rm out}
        \right]\right]_p .
\]
The lower bound on the distances follows, since any nontrivial logical \(X\)-operator of the concatenated code must act
nontrivially on at least \(d_{X,\rm out}\) inner blocks, and its restriction to
each such block represents a nontrivial logical \(X\)-operator of the inner
code. Hence its weight is at least
\(d_{X,\rm in}d_{X,\rm out}\). The same argument applies to \(Z\)-operators.

We use the alphabet-reduction theorem of 
\cite[Proposition~5.1]{GG24}, stated next in our notation and where we keep track of two  distances \(d_X,d_Z\).
\begin{proposition}[{\cite[Proposition~5.1]{GG24}}]
\label{prop:GG-restricted-concatenation}
Let \(\K=\F_{p^\kappa}\). Let
\((Q_{\mathrm{in}}^{(h)})_{h=1}^4\) be a 4-multiplication-friendly collection of inner CSS codes
over \(\F_p\), where \(Q_{\mathrm{in}}^{(h)}\) has parameters
\([[\,\nu,\kappa,d_{X, \mathrm{in}}^{(h)}, d_{Z, \mathrm{in}}^{(h)}\,]]_p\) and logical space \(\K\). Let
\((Q_{\mathrm{out}}^{(h)})_{h=1}^3\) be a triple of outer CSS codes over \(\K\), where
\(Q_{\mathrm{out}}^{(h)}\) has parameters
\([[\,n_{\mathrm{out}},k_{\mathrm{out}},d_{X, \mathrm{out}}^{(h)}, d_{Z, \mathrm{out}}^{(h)}\,]]_{\K}\), and suppose this
triple supports transversal \(CCZ_{\K}\).

Choose \(X\)-encoding maps for the inner and outer CSS codes as follows. For
\(h=1,\ldots,4\), choose \(\Enc_{X,{\rm in}}^{(h)}\) compatible with
\(\Enc_{Z,{\rm in}}^{(h)}\) with respect to the bilinear mapping 
\[
        (x,z)\longmapsto \operatorname{Tr}_{\K/\F_p}(xz).
\]
For \(h=1,2,3\), choose \(\Enc_{X,{\rm out}}^{(h)}\) compatible with
\(\Enc_{Z,{\rm out}}^{(h)}\) with respect to the standard \(\K\)-bilinear form
on \(\K^{k_{\rm out}}\). Such choices exist by
Lemma~\ref{lem:compatible-X-encoding}. The concatenation below is taken with
these compatible \(X/Z\)-encodings.

Fix an integer \(1\le \tau\le p\) with \(3(\tau-1)<\kappa\). Choose a subset
\(T\subseteq\F_p\) with \(|T|=\tau\), and identify \(S_\tau:=\F_p[X]^{<\tau}\) with
\(\F_p^T\) by evaluation on \(T\). Also fix an identification
\(\K\simeq \F_p[X]/(\gamma(X))\simeq \F_p[X]^{<\kappa}\), where
\(\gamma(X)\in\F_p[X]\) is irreducible of degree \(\kappa\). Thus \(S_\tau\subseteq\K\).

For \(h=1,2,3\), define the restricted concatenated code by
\[
        \widetilde Q^{(h)}
        :=
        (Q_{\mathrm{in}}^{(h)}\circ Q_{\mathrm{out}}^{(h)})\big|_{S_\tau^{k_{\mathrm{out}}}} .
\]
Then \(\widetilde Q^{(h)}\) has parameters
\[
        [[\,\nu n_{\mathrm{out}},\ \tau k_{\mathrm{out}},\ d_X
        \ge
d_{X,\mathrm{in}}^{(h)}d_{X,\mathrm{out}}^{(h)},\
        d_Z
        \ge
d_{Z,\mathrm{in}}^{(h)}d_{Z,\mathrm{out}}^{(h)}\,]]_p .
\]
Moreover, the triple
\((\widetilde Q^{(1)},\widetilde Q^{(2)},\widetilde Q^{(3)})\)
supports transversal \(CCZ_p\).
\end{proposition}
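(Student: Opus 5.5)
The argument has three parts: dimension, distances, and the transversal \(CCZ_p\) property. Dimension and distances come essentially for free from the concatenation bounds and Lemma~\ref{lem:css-restriction}. The main work is constructing the coefficient vector for \(CCZ_p\).

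\emph{Dimension and distances.} The concatenated code \(Q_{\rm in}^{(h)}\circ Q_{\rm out}^{(h)}\) has length \(\nu n_{\rm out}\), logical space \(\K^{k_{\rm out}}\) (viewed as \(\F_p^{\kappa k_{\rm out}}\)), and distances at least \(d_{X,\rm in}^{(h)}d_{X,\rm out}^{(h)}\) and \(d_{Z,\rm in}^{(h)}d_{Z,\rm out}^{(h)}\), by the block-counting argument given above. Since \(S_\tau^{k_{\rm out}}\) is an \(\F_p\)-subspace of dimension \(\tau k_{\rm out}\), Lemma~\ref{lem:css-restriction} (applied over \(\F_p\)) yields dimension \(\tau k_{\rm out}\) and does not decrease either distance. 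The logical space of \(\widetilde Q^{(h)}\) is identified with \(\F_p^{T\times[k_{\rm out}]}\) via evaluation on \(T\).

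\emph{Transversal \(CCZ_p\).} Take logical inputs \(y^{(h)}\in S_\tau^{k_{\rm out}}\) and physical representatives \(z^{(h)}=(z^{(h)}_1,\dots,z^{(h)}_{n_{\rm out}})\), where \(z_j^{(h)}\in\Enc^{(h)}_{Z,\rm in}(w^{(h)}_j)\) and \(w^{(h)}\in\Enc^{(h)}_{Z,\rm out}(y^{(h)})\).

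\begin{enumerate}
\item By the outer transversal property, \(\sum_i y^{(1)}_iy^{(2)}_iy^{(3)}_i=\sum_j b_jw^{(1)}_jw^{(2)}_jw^{(3)}_j\) in \(\K\).
\item Each \(y_i^{(h)}\) is a polynomial of degree \(<\tau\), and \(3(\tau-1)<\kappa\). Hence the triple product is computed in \(\F_p[X]^{<\kappa}\) without reduction modulo \(\gamma\).
\item The target logical phase is \(\sum_{i}\sum_{t\in T}y^{(1)}_i(t)y^{(2)}_i(t)y^{(3)}_i(t)=\lambda\bigl(\sum_i y^{(1)}_iy^{(2)}_iy^{(3)}_i\bigr)\), where \(\lambda:\K\to\F_p\) is the \(\F_p\)-linear functional \(f\mapsto\sum_{t\in T}f(t)\) on \(\F_p[X]^{<\kappa}\). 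Step 2 is exactly what makes this identity valid.
\item Write \(\lambda=\operatorname{Tr}_{\K/\F_p}(\mu\,\cdot)\) for some \(\mu\in\K\). The target then becomes \(\sum_j\operatorname{Tr}(\mu b_jw^{(1)}_jw^{(2)}_jw^{(3)}_j)\).
\item Fix a representative \(u_j\in\Enc^{(4)}_{Z,\rm in}(\mu b_j)\). By 4-multiplication-friendliness, \(\Dec_Z(z^{(1)}_j*z^{(2)}_j*z^{(3)}_j*u_j)=\mu b_jw^{(1)}_jw^{(2)}_jw^{(3)}_j\).
\item The map \(x\mapsto\operatorname{Tr}(\Dec_Z(x))\) is \(\F_p\)-linear, so it equals \(x\mapsto\langle\beta,x\rangle\) for some \(\beta\in\F_p^\nu\). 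Set the physical coefficients on block \(j\) to \(\beta*u_j\).
\end{enumerate}

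Summing over \(j\) then gives the identity \eqref{eq:transversal-ccz-identity} over \(\F_p\), and Lemma~\ref{lem:ccz-identity-implements-gate} finishes the argument.

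The main obstacle is step 6, because the resulting coefficient must not depend on the choice of representatives. Here \(u_j\) is fixed once and for all, and the multiplication-friendly identity holds for every choice of representatives \(z_j^{(h)}\). The outer identity in step 1 likewise holds for all representatives \(w^{(h)}\). The coefficient vector is therefore independent of the representatives.

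The compatible \(X\)-encodings chosen in the statement are not needed for the \(CCZ\) identity, which involves only \(Z\)-cosets. They are needed only to make the concatenation a well-defined CSS code whose \(X\)-logical operators satisfy the stated distance bounds.
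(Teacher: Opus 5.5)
Your proposal is correct. For dimension and distances it is exactly the paper's argument: the product bounds for the concatenated code, followed by Lemma~\ref{lem:css-restriction} applied to the \(\tau k_{\rm out}\)-dimensional \(\F_p\)-subspace \(S_\tau^{k_{\rm out}}\). For the transversal \(CCZ_p\) claim the paper gives no argument of its own. It invokes \cite[Proposition~5.1]{GG24} as a black box, with only the hint that the fourth encoder expands the coefficient vector. You instead give a self-contained proof, and each step holds up. The first ingredient is the outer \(\K\)-identity. The second is the functional \(\lambda(f)=\sum_{t\in T}f(t)\), which recovers the \(\F_p\)-valued target \(\sum_{i,t}y^{(1)}_i(t)y^{(2)}_i(t)y^{(3)}_i(t)\) precisely because \(3(\tau-1)<\kappa\) rules out reduction modulo \(\gamma\). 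The third is the representation \(\lambda=\operatorname{Tr}_{\K/\F_p}(\mu\,\cdot)\). The last is the fourth encoder, which absorbs \(\mu b_j\) into the fixed physical coefficient \(\beta*u_j\), where \(\langle\beta,\cdot\rangle=\operatorname{Tr}_{\K/\F_p}\circ\Dec_Z\). Fixing \(u_j\) once and for all correctly makes the coefficient vector independent of the representatives. The citation is shorter; your version shows exactly where the hypothesis \(3(\tau-1)<\kappa\) and the arity \(4\) are used.

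One small correction to your closing remark: compatibility does matter for the \(CCZ\) identity, though only indirectly. You assume that every representative of a concatenated \(Z\)-coset is a block of inner representatives of some outer representative. That requires the concatenated \(X\)-stabilizer space to be the lift of \(Q_{X,\rm out}^\perp\). Proving this uses \(\langle x',z'\rangle=\operatorname{Tr}_{\K/\F_p}(xz)\) blockwise together with the \(\K\)-linearity of \(Q_{X,\rm out}\).
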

The bound on the distances follows from the bound on the distances of the concatenated code and by  Lemma~\ref{lem:css-restriction}.

We now combine Proposition~\ref{prop:GG-restricted-concatenation} with the quantum CSS code from
Theorem~\ref{thm:main} and the projective-multiplicity multiplication-friendly codes from
Proposition~\ref{prop:projective-multiplicity-mfc}.

\begin{theorem}[Fixed-prime-field alphabet reduction]
\label{thm:one-shot-alphabet-reduction}
For any fixed \(m\ge 3\) and prime \(p\), there exists an explicit family of
\(\F_p\)-ary quantum CSS code triples
\(
        (Q^{(1)},Q^{(2)},Q^{(3)})
\)
supporting transversal \(CCZ_p\) gates, where each code has parameters 
\[
        \left[\left[
        n,\ 
        \Theta\!\left(\frac{n}{(\log n)^4}\right),\ 
         \   d_X
        =
        \Omega\!\left(\frac{n}{(\log n)^4}\right),\ d_Z
        =
        \Omega\!\left(\frac{n^{1/m}}{(\log n)^{4/m}}\right)\!
        \right]\right]_p.
\]
Moreover, the \(Z\)-stabilizer generators have weight at most
\(
      O\!\left(n^{1/m}(\log n)^{1-4/m}\right).
\)
\end{theorem}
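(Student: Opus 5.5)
We take the outer code to be the growing-alphabet code \(Q_{\rm out}\) of Theorem~\ref{thm:main}, used in all three slots, and apply Proposition~\ref{prop:GG-restricted-concatenation}. The alphabet is \(\K=\F_q\) with \(q=p^\kappa\). It has length \(n_{\rm out}=\Theta(N)\), dimension \(k_{\rm out}=\Theta(N)\), \(d_{X,\rm out}=\Theta(N)\), \(d_{Z,\rm out}=\Theta(N^{1/m})\), and \(Z\)-stabilizer generators of weight \(O(N^{1/m})\) by Proposition~\ref{prop:Z-low-weight}. Since \(q\) is polynomial in \(N\), we have \(\kappa=\Theta(\log N)\). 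If necessary we first enlarge \(\kappa\) slightly, as the construction of \cite{KT26} allows, or we pass to an extension field, which preserves all properties.

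For the inner codes we use Proposition~\ref{prop:projective-multiplicity-mfc} with \(a=4\) over \(\F_p\), with the balanced multiplicities of Remark~\ref{rem:balanced-multiplicities}. This gives classical \([\nu,\kappa]_p\) codes with \(\nu=\Theta(\kappa^4)=\Theta((\log N)^4)\). Via the trivial CSS conversion these become a \(4\)-multiplication-friendly collection of \([[\nu,\kappa,\ge1,\ge1]]_p\) codes. We fix \(\tau=1\), which satisfies \(1\le\tau\le p\) and \(3(\tau-1)<\kappa\); any constant \(\tau\) would do. Proposition~\ref{prop:GG-restricted-concatenation} then yields a triple supporting transversal \(CCZ_p\) with the following parameters:
\begin{itemize}
\item length \(n=\nu n_{\rm out}=\Theta(N(\log N)^4)\);
\item dimension \(\tau k_{\rm out}=\Theta(N)\);
\item \(d_X\ge d_{X,\rm out}=\Omega(N)\) and \(d_Z\ge d_{Z,\rm out}=\Omega(N^{1/m})\).
\end{itemize}
Inverting, \(\log n=\Theta(\log N)\) and \(N=\Theta(n/(\log n)^4)\). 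This gives dimension \(\Theta(n/(\log n)^4)\), \(d_X=\Omega(n/(\log n)^4)\), and \(d_Z=\Omega((n/(\log n)^4)^{1/m})\), as claimed. An extra constant in the logical dimension from \(\tau\) is harmless.

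The main obstacle is the \(Z\)-stabilizer weight, which requires describing \(\widetilde Q_Z^\perp\) explicitly. For the concatenated code \(Q_{\rm in}\circ Q_{\rm out}\), we expect \(Q_Z^\perp\) to be generated by two families:
\begin{itemize}
\item the inner \(Z\)-stabilizers \((C^{(h)})^\perp\) in each block, of weight at most \(\nu\);
\item the lifts of outer \(Z\)-stabilizers.
\end{itemize}
For a lift, take an outer generator \(s\in Q_{Z,\rm out}^\perp\) and an \(\F_p\)-functional \(\lambda\in\K\). Place in each block \(j\in\supp(s)\) a representative of \(\Enc_{X,\rm in}(\lambda s_j)\), or more precisely the \(\K\)-multiple dictated by trace compatibility. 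The result pairs to zero with every codeword, because \(\sum_j\operatorname{Tr}(\lambda s_j z_j)=\operatorname{Tr}(\lambda\langle s,z\rangle)=0\). It has weight at most \(\nu\,\wt(s)\). We will verify that these span by a dimension count: inner checks plus \(\kappa\) times the outer check count. The restriction to \(S_\tau^{k_{\rm out}}\) adds further \(Z\)-stabilizers, namely the lifts of the outer \(X\)-logical representatives that annihilate \(S_\tau\). This is the delicate point, since outer \(X\)-logicals have weight \(\Theta(N^{1/m})\) at best.

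To handle this, we will choose the outer \(X\)-logical representatives cleverly. From the proof of Proposition~\ref{prop:dual-avoidance} and Lemma~\ref{lem:dual-avoidance-upper}, for each \(a\in A\) there is a dual local row \(h_{O,a}\) supported on one orbit, with \(a\)-coordinate \(1\) and no other support in \(A\). Its restriction to \(E\) therefore represents the logical \(Z\)-operator dual to coordinate \(a\). This is compatible with the standard form, because \(\langle h_{O,a}|_E,c|_E\rangle=-c_a\). Its weight is \(O(N^{1/m})\). Consequently the additional restriction generators are lifts \(h_{O,a}|_E\otimes\mu\), for \(\mu\) ranging over \(S_\tau^{\perp_{\rm Tr}}\), each of weight \(\nu\cdot O(N^{1/m})\).

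In total, every generator has weight \(O(\nu N^{1/m})=O((\log n)^4\,(n/(\log n)^4)^{1/m})=O(n^{1/m}(\log n)^{4-4/m})\). This falls short of the stated \(O(n^{1/m}(\log n)^{1-4/m})\). Matching the stated exponent would need each outer generator to lift to inner blocks using only \(O(\log n)\) physical symbols per block; the inner \(X\)-logical representatives are all of \(\F_p^\nu\) modulo \(\{0\}\), so I would try to pick low-weight representatives there. I expect this bookkeeping to be where the real work lies.
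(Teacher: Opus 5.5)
Your construction and parameter bookkeeping are the same as the paper's: the outer code from Theorem~\ref{thm:main}, the four inner codes from Proposition~\ref{prop:projective-multiplicity-mfc} with $a=4$ and $\nu=\Theta(\kappa^4)$, a fixed $\tau$, Proposition~\ref{prop:GG-restricted-concatenation}, and the rows $h_{O,a}|_E$ as low-weight representatives of $y\mapsto y_a$ for the restriction checks. The gap is the final claim of the theorem. As you note yourself, your lifts give $Z$-stabilizer weight only $O(\nu N^{1/m})=O(n^{1/m}(\log n)^{4-4/m})$. That is a factor $\Theta((\log n)^3)$ worse than the stated $O(n^{1/m}(\log n)^{1-4/m})$, so the statement as given is not proved.

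The missing idea is small but essential, and your remark that the inner representatives live in ``$\F_p^\nu$ modulo $\{0\}$'' is where it goes wrong. For the trivial inner code $\operatorname{CSS}(\F_p^\nu,C^{(h)};\enc^{(h)})$, the $X$-encoding cosets live in $Q_{X,\rm in}/Q_{Z,\rm in}^\perp=\F_p^\nu/(C^{(h)})^\perp$. The space $\{0\}$ is $Q_{X,\rm in}^\perp$, which is not the relevant one. Two vectors lie in the same coset exactly when they induce the same functional on $C^{(h)}$.

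\begin{itemize}
\item Since $\dim C^{(h)}=\kappa$, fix an information set $I_{\rm in}$ of size $\kappa$. The map $c\mapsto c|_{I_{\rm in}}$ is then a bijection $C^{(h)}\to\F_p^\kappa$.
\item Consequently every $\F_p$-linear functional $c\mapsto\operatorname{Tr}_{\K/\F_p}\bigl(\mu\,(\enc^{(h)})^{-1}(c)\bigr)$ has a valid representative supported on $I_{\rm in}$.
\item Use these representatives in every inner block. The lift of an outer check $s$ then has weight at most $\kappa\,\wt(s)$.
\item The same holds for the restriction checks built from $h_{O,a}|_E$ and $\mu\in S_\tau^\perp$. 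Both families therefore have weight $O(\kappa N^{1/m})$.
\item The inner checks have weight at most $\nu=O(\kappa^4)$, which is $o(\kappa N^{1/m})$.
\item With $n=\Theta(\kappa^4N)$ and $\kappa=\Theta(\log n)$, using $|\K|\ge N$ for the lower bound on $\kappa$, this gives $O(\kappa N^{1/m})=O(n^{1/m}(\log n)^{1-4/m})$.
\end{itemize}

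This is exactly how the paper obtains the bound: it represents multiplication by $v_i$ as a $\kappa\times\kappa$ matrix on the inner information set.
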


\begin{proof}
We apply Proposition~\ref{prop:GG-restricted-concatenation} using the quantum CSS code from Theorem~\ref{thm:main} as the outer code, and the quantum multiplication-friendly codes  as inner codes, derived from the code in Proposition~\ref{prop:projective-multiplicity-mfc}.

By Theorem~\ref{thm:main}, for every fixed \(m\ge 3\) and characteristic \(p\), there is an
explicit family of  CSS codes over fields \(\K=\F_{p^\kappa}\) with parameters
\[
    [[\,\Theta(N),\Theta(N),d_X=\Theta(N), d_Z=\Theta(N^{1/m})\,]]_{p^\kappa},
\]
where \(N\) is the length of the underlying algebraic expander code.
These codes support transversal \(CCZ_{\K}\), and their \(Z\)-stabilizer generators have 
weight \(O(N^{1/m})\). Since the algebraic expander code construction evaluates on a set
\(\Omega\subseteq\K\) of size \(N\), we have \(|\K|\ge N\). On the other hand,
\(|\K|\) is polynomial in \(N\). Thus, for fixed characteristic \(p\),
\(
        \kappa=\Theta(\log N).
\)

Next, by Proposition~\ref{prop:projective-multiplicity-mfc} with arity \(a=4\), using the
balanced multiplicity profile from Remark~\ref{rem:balanced-multiplicities}, we get a
classical 4-multiplication-friendly collection for \(\K/\F_p\) of length
\(\nu=\Theta(\kappa^4)=O(\log^4 N)\).

We convert these classical codes to four inner quantum  codes by the
trivial construction
\(Q_{\mathrm{in}}^{(h)}=\operatorname{CSS}(\F_p^\nu,C_{\mathrm{in}}^{(h)};\enc^{(h)})\) for \(h=1,\ldots,4\), each with parameters \([[\,\nu,\kappa,d_{X, \mathrm{in}}, d_{Z, \mathrm{in}}\ge 1\,]]_p\). Their \(Z\)-stabilizers have
 weight at most \(\nu\). Moreover, every \(\F_p\)-linear functional on \(\K\) has a physical inner
\(Z\)-logical representative of weight at most \(\nu\).

Choose any fixed
\(1\le \tau\le p\) and let \(\kappa\) be sufficiently large such that
\(3(\tau-1)<\kappa\).
By Proposition~\ref{prop:GG-restricted-concatenation}, concatenating the
inner and outer CSS codes and then restricting the logical alphabet yields an
explicit triple of \(\F_p\)-ary CSS codes supporting transversal
\(CCZ_p\).

The block length is
\(
        n=\Theta(\nu N),
\)
and the dimension is
\(
        \tau\Theta(N)
        =
        \Theta(N)
        =
        \Theta\!\left(\frac{n}{(\log n)^4}\right).
\)
The outer code from Theorem~\ref{thm:main} has
\(
        d_X=\Theta(N),        d_Z=\Theta(N^{1/m})
\), 
while each inner code has both CSS distances at least \(1\). Hence
Proposition~\ref{prop:GG-restricted-concatenation} gives
\[
        d_X
        =
        \Omega(N)
        =
        \Omega\!\left(\frac{n}{(\log n)^4}\right),
\qquad        d_Z
        =
        \Omega(N^{1/m})
        =
        \Omega\!\left(\frac{n^{1/m}}{(\log n)^{4/m}}\right).
\]

It remains to track the weight of the \(Z\)-stabilizer generators. The concatenated code has
two standard types of \(Z\)-stabilizers.

The first type consists of the inner \(Z\)-stabilizers. These have row weight at most \(\nu\).

The second type consists of lifts of outer \(Z\)-stabilizers. Let \(v=(v_i)\in\K^E\) be an outer
\(Z\)-stabilizer. Fix an information set \(I_{\mathrm{in}}\) of size \(\kappa\) for the inner code.
After choosing coordinates on \(I_{\mathrm{in}}\), each element of \(\K\) is identified with a vector
in \(\F_p^\kappa\). For any \(i\), multiplication by \(v_i\) is represented by a
\(\kappa\times \kappa\) matrix over \(\F_p\) acting on the information set of the \(i\)-th inner code block. Thus the single \(\K\)-linear constraint
\(\sum_i v_i z_i=0\) gives \(\kappa\) scalar \(Z\)-checks over \(\F_p\), and each such scalar check
uses at most \(\kappa\) positions inside each nonzero outer block. Therefore an outer
\(Z\)-stabilizer of weight \(w\) lifts to \(\F_p\)-ary \(Z\)-stabilizers of row weight at most
\(\kappa w\). Since the outer \(Z\)-stabilizer generators have  weight \(O(N^{1/m})\), their lifts have
row weight \(O(\kappa N^{1/m})\).

Next consider the restriction step. Recall that the outer CSS code from Theorem~\ref{thm:main}
has logical space \(\K^A\), and in the alphabet-reduction step we restrict this logical space
to \(S_\tau^A\). This adds \(Z\)-stabilizers corresponding to the linear constraints that vanish on
\(S_\tau^A\). Let
\[
        S_\tau^\perp
        =
        \{\lambda\in\K:
        \operatorname{Tr}_{\K/\F_p}(\lambda s)=0
        \text{ for every }s\in S_\tau\}.
\]
Then these added constraints are exactly the constraints 
\(\operatorname{Tr}_{\K/\F_p}(\lambda y_a)=0\), for all \(a\in A\) and
\(\lambda\in S_\tau^\perp\).

We now show that these added stabilizers also have low-weight representatives. In the proof of
Proposition~\ref{prop:Z-low-weight}, for every \(a\in A\) we constructed a  dual codeword
\(h_{H\cdot a,a}\in C_r^\perp\) whose \(a\)-th coordinate is \(1\),  whose remaining support lies in \(E\), and
\(\wt(h_{H\cdot a,a})\le |H|=O(N^{1/m})\).

Let \(c\in C_r\) with   \(c|_A=y\in\K^A\). Since
\(h_{H\cdot a,a}\in C_r^\perp\), we have
\[
        0=\langle h_{H\cdot a,a},c\rangle
        =
        y_a+\langle h_{H\cdot a,a}|_E,c|_E\rangle .
\]
Hence \(-h_{H\cdot a,a}|_E\) represents a functional on the physical space that represents the  logical functional \(y\mapsto y_a\). Hence, by taking the trace and lifting this functional  through the
inner code, we get that its weight is  multiplied by at most \(\kappa\). Thus every added restriction stabilizer has
row weight \(O(\kappa N^{1/m})\).

Altogether, the final \(Z\)-stabilizer generators have weight at most
\(
        O(\max\{\nu,\kappa N^{1/m}\})=O(\kappa N^{1/m}).
\)
 
Since \(n=\nu n_{\rm out}=\Theta(\nu N)=\Theta(\kappa^4 N)\), we have
\[
        \kappa N^{1/m}
        =
        O\!\left(
        \kappa\left(\frac{n}{\kappa^4}\right)^{1/m}
        \right)
        =
        O\!\left(
        n^{1/m}(\log n)^{1-4/m}
        \right),
\]
using \(\kappa=\Theta(\log n)\). Thus the final \(Z\)-stabilizer generators have
weight
\[
        O\!\left(
        n^{1/m}(\log n)^{1-4/m}
        \right).
\]
        
\end{proof}

\section{Discussion and open problems}
\label{sec:discussion}
We have shown that the algebraic puncturing framework for transversal \(CCZ\) codes can be combined with algebraic expander codes without losing the linear number of logical qudits. A direct use of the Golowich--Guruswami puncturing theorem runs into the small dual distance of these codes, and therefore yields only a sublinear number of logical qudits. Our construction shows that this loss is not inherent to the algebraic expander codes themselves. The key step is to choose the puncturing set as a large interpolation set for a lower-rate algebraic expander code, which gives a puncturing statement tailored to this family. As a result, the growing-alphabet construction has linear dimension, transversal \(CCZ\) gates, and asymmetric distances with one distance linear and the other polynomial.

A separate feature of the construction is that the \(Z\)-stabilizer space is generated by explicit low-weight checks. In particular, the \(Z\)-stabilizer space admits an explicit generating set of sublinear weight. This shows that one does not have to leave the Schur-product puncturing framework in order to obtain sublinear stabilizer locality, at least on one side. The one-shot alphabet reduction then carries these features to fixed prime fields, with near-linear dimension, an asymmetric distance profile, and \(Z\)-stabilizer locality preserved up to polylogarithmic factors.

Several questions remain open, and we mention two that seem most immediate. The most natural is whether the inherent asymmetry of the construction can be removed. In particular, can one obtain both  distances linear in the block length while also admitting low-weight generating sets for both the \(X\)- and \(Z\)-stabilizer spaces? In the present construction, only the \(Z\)-stabilizer space has an explicit sublinear-weight generating set, and only the \(X\)-distance is linear. Removing this asymmetry would bring the construction closer to a genuinely low-density family of codes with transversal CCZ.

A second question concerns decoding. The construction has both Tanner and algebraic structure, but we do not give an efficient decoder for the resulting quantum codes. It would be useful to understand whether these two structures can be combined to give an efficient decoder.

\section*{Acknowledgments}

The authors thank Louis Golowich for suggesting the comparison with the direct-sum construction based on quantum RS codes, which led to a clearer presentation of the contribution of this work.
\bibliographystyle{alpha}
\bibliography{biblio}

\appendix

\section{Proof of Proposition \ref{prop:projective-multiplicity-mfc}}
\label{app:projective-multiplicity-mfc}

\begin{proof}
Fix an irreducible polynomial \(\gamma(X)\in\mathbb F_q[X]\) of degree \(\kappa\), and identify
\(\mathbb F_{q^\kappa}\cong \mathbb F_q[X]/(\gamma(X))\).
Thus every element of \(\mathbb F_{q^\kappa}\) has a unique representative in
\(\mathbb F_q[X]^{<\kappa}\).

 For \(b\in\mathbb F_q\), the Hasse derivatives
\(\partial_b^{(i)}f\) are defined by the expansion
\[
f(X)=\sum_{i\ge 0}\partial_b^{(i)}f\,(X-b)^i .
\]
For the point at infinity, with respect to a degree bound \(B\), define
\(\partial_{\infty,B}^{(i)}f=[X^{B-i}]f\), the coefficient of \(X^{B-i}\).

Let \(\mathcal T\) be the set of tuples \((P;i_1,\ldots,i_a)\) such that
\(P\in\mathbb P^1(\mathbb F_q)\), \(0\le i_j< \kappa\), and \(i_1+\cdots+i_a<\ell_P\).
Thus \(|\mathcal T|=\nu\).

For each \(h\in[a]\), define an \(\mathbb F_q\)-linear encoder
\(\Enc^{(h)}:\mathbb F_{q^\kappa}\to\mathbb F_q^{\mathcal T}\) as follows. Given
\(z\in\mathbb F_{q^\kappa}\), let \(f\in\mathbb F_q[X]^{<\kappa}\) be its
unique representative modulo \(\gamma(X)\), and set
\[
\Enc^{(h)}(z)_{(P;i_1,\ldots,i_a)}=
\begin{cases}
\partial_P^{(i_h)}f, & P\in\mathbb F_q,\\
\partial_{\infty,\kappa-1}^{(i_h)}f, & P=\infty.
\end{cases}
\]
and let \(C^{(h)}=\operatorname{im}(\Enc^{(h)})\).

We first show that \(\Enc^{(h)}\) is injective. Suppose \(\Enc^{(h)}(z)=0\), and let \(f\in\mathbb F_q[X]^{<\kappa}\) be the representative of \(z\).
For each \(P\in\mathbb P^1(\mathbb F_q)\) and each \(0\le r<\min\{\ell_P,\kappa\}\),
the tuple with \(i_h=r\) and all other \(i_j=0\) lies in \(\mathcal T\), so all such Hasse derivatives vanish.
Set \(t_P=\min\{\ell_P,\kappa\}\). Then \(\sum_P t_P\ge \kappa\). Indeed, if at least one \(t_P\geq \kappa\) the inequality follows. Otherwise \(\sum_{P}t_P=\sum_{P}\ell_P=a(\kappa-1)+1\geq \kappa\).

 The vanishing at infinity implies
\(\deg f\le \kappa-1-t_\infty\), while the finite-point conditions imply that the total affine zero multiplicity of \(f\) is at least
\[
    \sum_{b\in\mathbb F_q} t_b
    \ge \kappa-t_\infty .
\]
This is impossible unless \(f=0\). Thus \(\Enc^{(h)}\) is injective and \(\dim C^{(h)}=\kappa\).

We now define the decoder. Given \(w\in\mathbb F_q^{\mathcal T}\), for each
\(P\in\mathbb P^1(\mathbb F_q)\) and \(0\le r<\ell_P\), define
\[
y_{P,r}=
\sum_{\substack{i_1+\cdots+i_a=r\\0\le i_j< \kappa}}
w_{(P;i_1,\ldots,i_a)}.
\]
We claim there is a unique polynomial \(W\in\mathbb F_q[X]\) of degree at most \(D=a(\kappa-1)\) such that
\[
    \partial_b^{(r)}W=y_{b,r}
    \quad\text{for all } b\in\mathbb F_q,\ 0\le r<\ell_b,
\]
and
\[
    \partial_{\infty,D}^{(r)}W=[X^{D-r}]W=y_{\infty,r}
    \quad\text{for }0\le r<\ell_\infty .
\]

We prove uniqueness first. The interpolation conditions define a linear map from
the space of polynomials of degree at most \(D\) to
\(\mathbb F_q^{D+1}\). Thus it suffices to show that the only polynomial
satisfying the same conditions with all prescribed values equal to zero is the
zero polynomial. Suppose that \(W\in\mathbb F_q[X]\) has degree at most
\(D\) and satisfies the above interpolation conditions with all prescribed
values equal to zero. The conditions at infinity imply   that
\(
    \deg W\le D-\ell_\infty .
\)
On the other hand, for each \(b\in\mathbb F_q\), the conditions 
\(
    \partial_b^{(r)}W=0
    \)for every \(0\le r<\ell_b 
\),
imply that  \(W\) has a zero of multiplicity at least \(\ell_b\) at \(b\).
Hence the total multiplicity of the affine roots of \(W\) is at least
\(
    \sum_{b\in\mathbb F_q}\ell_b
    =
    D+1-\ell_\infty .
\)
This is strictly larger than the degree bound \(D-\ell_\infty\), so \(W=0\).
Therefore there is at most one polynomial satisfying the interpolation
conditions. Since the space of polynomials of degree at most \(D\) has dimension
\(D+1\), and the number of prescribed values is
\(
    \sum_{P\in\mathbb P^1(\mathbb F_q)}\ell_P=D+1,
\)
the mapping is bijective, and existence follows as well. Thus there is a unique such polynomial \(W\).

Define \(\Dec(w)=W\bmod \gamma(X)\in\mathbb F_{q^\kappa}\).
This map is \(\mathbb F_q\)-linear.

It remains to prove multiplication-friendliness. Let \(z_h\in\mathbb F_{q^\kappa}\) with representatives
\(f_h\in\mathbb F_q[X]^{<\kappa}\), and set \(w=\Enc^{(1)}(z_1)*\cdots*\Enc^{(a)}(z_a)\).

For \(b\in\mathbb F_q\) and \(0\le r<\ell_b\), the Hasse product rule gives\[
\partial_b^{(r)}(f_1\cdots f_a)
=
\sum_{\substack{i_1+\cdots+i_a=r\\0\le i_j\le \kappa-1}}
\partial_b^{(i_1)}f_1\cdots \partial_b^{(i_a)}f_a
=
y_{b,r}.
\]

For \(0\le r<\ell_\infty\), at infinity, using \(D=a(\kappa-1)\), we have
\[
\partial_{\infty,D}^{(r)}(f_1\cdots f_a)
=
[X^{D-r}](f_1\cdots f_a)
=
\sum_{\substack{i_1+\cdots+i_a=r\\0\le i_j<\kappa}}
\prod_{j=1}^a [X^{\kappa-1-i_j}]f_j
=
y_{\infty,r}.
\]
 Since \(\deg(f_1\cdots f_a)\le D\), uniqueness gives
\(W=f_1\cdots f_a\). Therefore
\[
\Dec(w)=W\bmod \gamma(X)=f_1\cdots f_a \bmod \gamma(X)=z_1\cdots z_a,
\]
proving the claim.
\end{proof}

\section*{Author Contributions}
All authors contributed to the research and writing of this manuscript.

During the preparation of this work, the authors used a large language model for language editing to improve the clarity and presentation of the manuscript. The authors take full responsibility for the content of the paper.

\end{document}